\theoremstyle{definition}
\newtheorem{theorem}{\textit{Theorem}}
\theoremstyle{definition}
\newtheorem{definition}{\textit{Definition}}
\theoremstyle{definition}
\newtheorem{lemma}{\textit{Lemma}}
\begin{document}
\title{An algebraic method to calculate parameter regions for constrained steady-state distribution in stochastic reaction networks}
\author{Tan Van Vu}
\email{tan@biom.t.u-tokyo.ac.jp}
\author{Yoshihiko Hasegawa}
\email{hasegawa@biom.t.u-tokyo.ac.jp}
\affiliation{Department of Information and Communication Engineering, Graduate School of Information Science and Technology, The University of Tokyo, Tokyo 113-8656, Japan}

\begin{abstract}
Steady state is an essential concept in reaction networks.
Its stability reflects fundamental characteristics of several biological phenomena such as cellular signal transduction and gene expression.
Because biochemical reactions occur at the cellular level, they are affected by unavoidable fluctuations. 
Although several methods have been proposed to detect and analyze the stability of steady states for deterministic models, these methods cannot be applied to stochastic reaction networks.
In this paper, we propose an algorithm based on algebraic computations to calculate parameter regions for constrained steady-state distribution of stochastic reaction networks, in which the means and variances satisfy some given inequality constraints.
To evaluate our proposed method, we perform computer simulations for three typical chemical reactions and demonstrate that the results obtained with our method are consistent with the simulation results.
\end{abstract}

\pacs{}
\maketitle

\noindent\textbf{Many biological phenomena like cellular signal transduction and gene expression can be described by stochastic reaction networks.
It has been known that these systems function robustly at the steady state in the presence of noise.
Therefore, it is pertinent to ask which conditions of parameters define such stable operating regime of the system.
This kind of information provides insights into understanding the underlying mechanism of the system, and particularly into the design processes of stochastic biocircuits.
Here, we propose an algebraic method to calculate the parameter regions in which the means and variances at the steady state satisfy some given constraints.
These constraints can be, for example, an upper bound of variances, coefficients of variation, or Fano factors to control fluctuations.
We show in the experiments that our approach gives results comparable with stochastic simulations.
As all computations are symbolic, our method does not require the prior knowledge of the parameters, which are often unavailable in biological systems.
The constraints can also be intentionally added to obtain the conditions of parameters under which the system is brought into a desired steady state.
}

\section{Introduction}
Biochemical reaction networks are mathematical models used for describing biological processes, such as signal transduction and gene expression, at the cellular level \cite{Thattai.2001,Shibata.2005}.
Since most of the biological processes that constitute an organisms' activity are either in or moving toward a steady state, elucidating the stability of the steady state provides an understanding of the behavior of the biochemical processes \cite{Fichera.1977,Angeli.2004}.
Although a great deal of effort has been put into the formulation and analysis of reaction networks, detecting and analyzing steady states of reaction networks remain challenging \cite{Karin.2005,Gheorghe.2009,Mincheva.2012}. 
Two commonly used models for reaction networks are continuous deterministic models and discrete stochastic models. In deterministic models, no randomness is involved and all state variables are predictable. 
Deterministic models are often described by a set of ordinary differential equations and are most appropriate when the molecule numbers of all reactant species are sufficiently large that underlying fluctuations can be ignored.
However, in biological processes, noise is unavoidable and plays functional roles such as noise-induced bistability and oscillation \cite{Domijan.2009,Tsimring.2014,Hasegawa.2014}. 
Low molecule numbers of only a few reactant species can lead to significant fluctuations. 
In such cases, deterministic models fail to accurately depict the dynamics of the system; therefore, stochastic models are necessary.
Consequently, steady state in stochastic models becomes a distribution rather than a fixed point as in deterministic models.
Previous studies have shown that the stochastic dynamics of a well-mixed chemically reacting system can be accurately modeled by the chemical master equation (CME) \cite{Donald.1967,Gillespie.1992}. 
In most cases, the CME has not been analytically solved. Hence numerical computations such as stochastic simulation algorithms \cite{Gillespie.1977,Gibson.2000,Gillespie.2001}, finite state projection method \cite{Brian.2006,Gupta.2017}, and quantized tensor trains \cite{Vladimir.2014} are often conducted.
Although these numerical methods can help us obtain the distribution of the network with a particular parameter value, it is intractable to apply these methods to calculate the parameters that yield a desired steady state, in which the means and variances satisfy some given constraints.

Over the past few years, there have been many attempts to apply algebraic methods to the analysis of reaction networks \cite{Wang.2005,Manrai.2008,Gheorghe.2009,Thomson.2009,Martinez.2010,Loriaux.2013,SiegalGaskins.2015,Otero.2017,Jose.2017}.
The biggest advantage of the algebraic method is that it does not require knowledge of parameter values. 
In algebraic computations, the parameters of reaction networks are treated as symbolic quantities rather than as numbers. 
Moreover, at the steady state, rate equations of the deterministic model form an algebraic variety that can be studied using algebraic geometry.
Mart{\'i}nez \textit{et al.} \cite{Martinez.2010} proposed a procedure to locate the steady states of reaction networks from an algebraic geometry method-derived formula.
By computing Gr{\"o}bner basis \cite{Cox.2007} of rate equations, one can derive the bifurcation diagram of the reactant concentration at the steady state in terms of a specific parameter. 
In \cite{Wang.2005}, by exploiting exact symbolic computation, an approach is presented for analyzing the stability of a large class of biological networks that modeled as autonomous systems of differential equations. 
Siegal-Gaskins \textit{et al.} \cite{SiegalGaskins.2015} applied Sturm's theorem to the analysis of bistable biological circuits. 
All of these works have been applied to deterministic models.
One of the differences between stochastic and deterministic models is the concept of the steady state.
In deterministic models, the system may have multiple steady states which are fixed points.
On the contrary, in stochastic models, the system always holds a unique steady state depending on the initialization \cite{Gupta.2014,Gupta.2018}, which is a probability distribution.

In the present paper, we propose an algebraic method to calculate parameter regions in which the steady-state distribution of the reaction networks satisfies given constraints of means, variances, or fluctuation characteristics like coefficient of variation and Fano factor \cite{Thattai.2001,Ozbudak.2002,Mads.2003}.
The procedure of our method is as follows.
First, we compute closed-moment equations of reaction networks via moment closure approximations or linear noise approximation. 
At the steady state, moments are considered to be time-invariant and moment equations form an algebraic variety whose solutions hold the information of moments at the steady state. 
Since moment equations are obtained through approximation, there is the possibility that physically inappropriate values of moments will be included in the solutions.
To eliminate these inadmissible solutions, inequality constraints, positivity of both means and variances, and the upper bound of variances are added \cite{Juan.2017}.
Eventually, a system of multivariate polynomials that contains equations and inequations is obtained. 
Finally, we apply an algebraic method to compute the conditions of parameters such that the polynomial system has exactly one solution.
We demonstrate the validity of our method on three well-known reaction network models: a gene regulatory system, two-component Michaelis--Menten enzyme reactions, and a Brusselator model.
We perform stochastic simulations to sample the desired regions of parameters.
Although the ranges of parameters obtained with our method are approximate (for nonlinear systems), experiments show that the results of our method agree with the simulation.
The results of our proposed method provide an insight into the dynamic behavior of the reaction network at the steady state.
The volume of the space of admissible parameters can be considered as a quantity, which indicates the robustness of the system.
Moreover, with the flexibility of adding constraint conditions, our method can be used as a tool to explore the parameter configurations, which satisfy requirements in the design process of stochastic biocircuits \cite{Sakurai.2017}.

\section{Preliminaries}\label{sect.pre}
\subsection{Reaction networks}
We consider a general reaction network with $N$ reactant species $X_1,\dots,X_N$ interacting through $M$ reaction channels $\mathcal{C}_1,\dots,\mathcal{C}_M$ inside a cell with fixed volume $\Omega$. 
The system is assumed to be well-mixed. The reaction channel $\mathcal{C}_j~(1\leq j\leq M)$ is of the type:
\begin{equation*}
a_{1j}X_1+\dots+a_{Nj}X_N\xrightarrow{k_j}b_{1j}X_1+\dots+b_{Nj}X_N,
\end{equation*}
where $a_{ij},b_{ij}\in\mathbb{N}_{\geq 0}$ are the stoichiometric coefficients and $k_j\in\mathbb{R}_{>0}$ is the macroscopic rate of reaction. 
The state of system is fully determined by the vector of molecule numbers of each species, $\boldsymbol{n}=(n_1,\dots, n_N)$, where $n_i\in\mathbb{N}_{\geq 0}$ is the molecule number of the species $X_i$. 
The CME describing the time evolution of the system is given by
\begin{equation}\label{eq.cme}
\frac{\partial P(\boldsymbol{n},t)}{\partial t}=\sum_{j=1}^{M}(f_{j}(\boldsymbol{n}-\boldsymbol{V}_{j})P(\boldsymbol{n}-\boldsymbol{V}_{j},t)-f_{j}(\boldsymbol{n})P(\boldsymbol{n},t)),
\end{equation}
where $\boldsymbol{V}=[b_{ij}-a_{ij}]\in\mathbb{Z}^{N\times M}$ is a stoichiometric matrix, $\boldsymbol{V}_{j}$ denotes the $j^{\mathrm{th}}$ column of matrix $\boldsymbol{V}$, $P(\boldsymbol{n},t)$ is the probability that the system will be in state $\boldsymbol{n}$ at time $t$, and $f_j(\boldsymbol{n})$ represents the propensity function to account for the transition from a given state $\boldsymbol{n}$ to any other state in the reaction channel $\mathcal{C}_j~(1\leq j\leq M)$. 
Under the assumption of mass-action kinetics, the propensity function $f_j(\boldsymbol{n})$ has the following form:
\begin{equation}
f_j(\boldsymbol{n})=\Omega^{1-\sum_{i=1}^{N}a_{ij}}k_j\prod_{i=1}^{N}\frac{n_i!}{(n_i-a_{ij})!}.
\end{equation}
The solution of CME completely describes the stochastic dynamics of the system. 
However, in most cases, this differential equation is extremely difficult to solve explicitly. 
Consequently, stochastic simulation algorithms, such as the Gillespie algorithm \cite{Gillespie.1977} or its modifications \cite{Gibson.2000,Gillespie.2001}, are often used to simulate the dynamics of the system. 
Although stochastic simulation can exactly describe the stochastic evolution of the system, these methods are very computationally expensive when the number of reactant species is large. 
In these cases, the precision of the stochastic simulation is often sacrificed for faster, yet more approximate, methods. 
Various numerical and analytical methods have been proposed to approximately solve the CME, e.g., approximations of the CME solution by solving a truncated version of the Markov process \cite{Brian.2006,Vladimir.2014}, linear noise approximation \cite{Johan.2003,Scott.2006,Kampen.2007,Wallace.2012}, moment closure methods \cite{Carlos.2007,Ferm.2008,Chang.2009,Gillespie.2009,Ramon.2012,Angelique.2013}, and chemical Langevin equation treatments \cite{Gillespie.2000}.

\subsection{Algebraic preliminaries}
Let $\mathbb{K}$ be an arbitrary field and $\mathbb{K}[x_1,\dots,x_n]$ be the ring of multivariate polynomials over $\mathbb{K}$ with variables $x_1\prec x_2\prec\dots\prec x_n$, where $\prec$ denotes the ascending order of variables. 
$\boldsymbol{x}$ and $\boldsymbol{x}_m~(m<n)$ are used to denote $x_1,\dots,x_n$ and $x_1,\dots,x_m$, respectively. 
For any polynomial $P \in \mathbb{K}[\boldsymbol{x}]$ and variable $x_m$, $P$ can be viewed as a univariate polynomial in $x_m$ over $\mathbb{K}[\boldsymbol{x}_{m-1},x_{m+1},\dots,x_n]$. 
$\mathrm{deg}(P,x_m)$ denotes the degree of $P$ in $x_m$ and $\mathrm{lc}(P,x_m)$ represents the leading coefficient of $P$ with respect to (w.r.t.) $x_m$. 
For convenience, we define $\mathrm{deg}(0,x_m)\triangleq -1$. 
The largest variable effectively appearing in $P$ is called the leading variable of $P$ and is denoted by $\mathrm{lv}(P)$. 
If $\mathrm{lv}(P)=x_i$ then the initial and reductum of $P$ will be defined as follows:
\begin{equation*}
\mathrm{ini}(P)\triangleq \mathrm{lc}(P,x_i),~\mathrm{red}(P)\triangleq P-\mathrm{lc}(P,x_i)x_i^{\mathrm{deg}(P,x_i)}.
\end{equation*}
For any two nonzero polynomials $P,Q\in\mathbb{K}[\boldsymbol{x}]$ with $\mathrm{deg}(P,x_m)=n_1$ and $\mathrm{deg}(Q,x_m)=n_2>0$, the pseudo-division algorithm computes two polynomials $S,R\in \mathbb{K}[\boldsymbol{x}]$ such that $I^rP=SQ+R$, where $I=\mathrm{lc}(Q,x_m),~r=\max(n_1-n_2+1,0),~\mathrm{deg}(S,x_m)=\max(n_1-n_2,-1)$, and $\mathrm{deg}(R,x_m) < n_2$. 
The polynomials $S$ and $R$ are called the \textit{pseudo-quotient} and \textit{pseudo-remainder} of $P$ w.r.t. $Q$ in $x_m$ and denoted by $\mathrm{pquo}(P,Q,x_m)$ and $\mathrm{prem}(P,Q,x_m)$, respectively. 
If $\mathrm{lv}(Q)=x_i$ then $\mathrm{prem}(P,Q)\triangleq \mathrm{prem}(P,Q,x_i),~\mathrm{pquo}(P,Q)\triangleq \mathrm{pquo}(P,Q,x_i)$. $\mathrm{gcd}(P,Q,x_m)$ denotes the \textit{greatest common divisor} of $P\neq 0$ and $Q\neq 0$ w.r.t. $x_m$.

For any two polynomial sets $\mathcal{P}=\{P_1,P_2,\dots,P_{N_p}\}$ and $\mathcal{Q}=\{Q_1,Q_2,\dots,Q_{N_q}\}$, the expressions $\mathcal{P}=0, \mathcal{Q}\neq 0, \mathcal{Q}>0$ correspond to $\{P_1=0,P_2=0,\dots,P_{N_p}=0\}$, $\{Q_1\neq 0,Q_2\neq 0,\dots,Q_{N_q}\neq 0\}$ and $\{Q_1>0,Q_2>0,\dots,Q_{N_q}>0\}$, respectively.
$\mathrm{Zero}(\mathcal{P})$ denotes the set of all common real zeros of the polynomials in $\mathcal{P}$. Additionally, we use the following notation:
\begin{align*}
\mathrm{Zero}(\mathcal{P}\setminus\mathcal{Q})&\triangleq\{\boldsymbol{x}|\boldsymbol{x}\in\mathrm{Zero}(\mathcal{P})\setminus\mathrm{Zero}(\mathcal{Q})\},\\
\mathrm{Zero}(\mathcal{P},\mathcal{Q}>0)&\triangleq\{\boldsymbol{x}|\boldsymbol{x}\in\mathrm{Zero}(\mathcal{P}),Q(\boldsymbol{x})>0~\forall Q\in\mathcal{Q}\}.
\end{align*}
For any set $\mathcal{S}$, $\left|\mathcal{S}\right|$ denotes the cardinality of the set $\mathcal{S}$, i.e., the number of elements of $\mathcal{S}$.

\subsubsection{Resultant and subresultant}
Suppose we are given two polynomials $P(x),Q(x)$ with $n_1 \geq n_2>0$ as follows:
\begin{align*}
P(x)&=p_0x^{n_1}+p_1x^{n_1-1}+\cdots+p_{n_1-1}x+p_{n_1} &(p_0 \neq 0)\\
Q(x)&=q_0x^{n_2}+q_1x^{n_2-1}+\cdots+q_{n_2-1}x+q_{n_2} &(q_0 \neq 0).
\end{align*}
$(n_1+n_2)\times (n_1+n_2)$ Sylvester matrix $\boldsymbol{S}$ of $P$ and $Q$ has the following form:
\begin{equation*}
\begin{array}{c@{}c}
\boldsymbol{S} = 
\begin{pmatrix}
p_{0} & p_{1} & \cdots & p_{n_1} & & \\[0.02em]
& \cdots  & \cdots  & \cdots & \cdots \\[0.02em]
& & p_{0} & p_{1} & \cdots & p_{n_1} \\[0.02em]
q_{0} & q_{1} & \cdots & q_{n_2} & & \\[0.02em]
& \cdots  & \cdots  & \cdots & \cdots\\[0.02em]
& & q_{0} & q_{1} & \cdots & q_{n_2}
\end{pmatrix}
&
\begin{array}{l}
\\[-9mm] \rdelim\}{2}{0mm}[$n_2$] \\[9mm]\rdelim\}{2}{0mm}[$n_1$]
\end{array}
\end{array}\quad .
\end{equation*}
\begin{definition}
Resultant of two polynomials $P$ and $Q$ w.r.t. $x$, which is denoted by $\mathrm{res}(P,Q,x)$, is the determinant of $\boldsymbol{S}$.
\end{definition}
If $P,Q\in\mathbb{K}[x]$, then $\mathrm{res}(P,Q,x)=0$ if and only if $P$ and $Q$ have common non-constant factors in $\mathbb{K}[x]$ \cite{Cox.2007}. 
Moreover, if $\mathrm{res}(P,P',x)\neq 0$ then the polynomial $P(x)$ does not hold multiple roots.

Let $\boldsymbol{S}_{i'i}$ be the submatrix of $\boldsymbol{S}$ obtained by deleting the last $i$ of the $n_2$ rows of $P$ coefficients, the last $i$ of the $n_1$ rows of $Q$ coefficients and the last $2i+1$ columns, excepting column $n_1+n_2-i'-i$, for $0\leq i'\leq i\leq n_2$. 
The polynomial $S_{i}(x)=\sum_{i'=0}^{i}\mathrm{det}(\boldsymbol{S}_{i'i})x^{i'}$ is then called the $i^{\mathrm{th}}$ \textit{subresultant} of $P$ and $Q$ w.r.t. $x$ for $0\leq i\leq n_2$. 
If $n_1>n_2+1$, the definition of $i^{\mathrm{th}}$ subresultant $S_i(x)$ of $P$ and $Q$ w.r.t. $x$ is extended as follows:
\begin{equation*}
S_{n_2}(x)=q_0^{n_1-n_2-1}Q,~ S_i(x)=0,~ n_2<i<n_1-1.
\end{equation*}
$S_i$ is said to be \textit{defective} of degree $r$ if $\mathrm{deg}(S_i,x)=r<i$, and \textit{regular} otherwise.
\begin{definition}
Let $P, Q \in \mathbb{K}[x]$ be two polynomials with $n_1=\mathrm{deg}(P,x)\geq\mathrm{deg}(Q,x)=n_2>0$ and set
\begin{equation*}
\bar{n}=
\begin{cases}
n_1-1, &\quad \text{if } n_1>n_2\\
n_2, &\quad \text{otherwise.}
\end{cases}
\end{equation*}
Let $S_{\bar{n}+1}=P,S_{\bar{n}}=Q$ and $S_i$ be the $i^{\mathrm{th}}$ subresultant of $P$ and $Q$ w.r.t. $x$ for $0\leq i<\bar{n}$, then the sequence of polynomials $S_{\bar{n}+1},S_{\bar{n}},\dots,S_0$ is called the \textit{subresultant chain} of $P$ and $Q$ w.r.t. $x$.
\end{definition}
There is an effective method for constructing subresultant chains by means of pseudo-division \cite{Mishra.1993}.
\begin{definition}
Let $S_{\bar{n}+1},S_{\bar{n}},\dots,S_0$ be the subresultant chain of $P$ and $Q$ w.r.t. $x$. The sequence of regular subresultants $S_{d_2},\dots,S_{d_r}$ is called the \textit{subresultant regular subchain} (s.r.s.) of $P$ and $Q$ w.r.t. $x$ if
\begin{itemize}
\setlength\itemsep{0.2em}
\item[1.] $\bar{n}+1=d_1>d_2>\dots>d_r\geq 0$.
\item[2.] $S_{d_{i'}}$ is regular for all $2\leq i'\leq r$ and $S_i$ is defective for all $i \in \{0,\dots,\bar{n}\}\setminus\{d_2,\dots,d_r\}$.
\end{itemize}
\end{definition}
The s.r.s $S_{d_2},\dots,S_{d_r}$ is renamed $H_2,\dots,H_r$ in Algorithm \ref{alg.seqdecomp}. 
For any two polynomials $P,Q\subset\mathbb{Z}[x_1,\dots,x_n]$ with $\mathrm{deg}(P,x_n)\geq\mathrm{deg}(Q,x_n)$, the s.r.s $\{H_i\}_{i=2}^{r}$ of $P,Q$ provides an efficient way to calculate $\mathrm{gcd}(P,Q,x_n)$ without computing multiple gcds. 
Moreover, this s.r.s can be exploited to calculate $\mathrm{Zero}(\{P,Q\}\setminus I)$ as follows \cite{Wang.1998}:
\begin{equation*}
\mathrm{Zero}(\{P,Q\}\setminus I)=\bigcup_{i=2}^{r}\mathrm{Zero}(\{H_i,I_{i+1},\dots,I_r\}\setminus\{I,I_{i}\}),
\end{equation*}
where $I=\mathrm{lc}(Q,x_n),I_i=\mathrm{lc}(H_i,x_n)$ for all $i=2,\dots,r$.
\begin{definition}
A polynomial set $\mathcal{T}=\{T_1,T_2,\dots,T_n\} \subset \mathbb{K}[\boldsymbol{x}]$ is called a triangular set if
\begin{itemize}
\setlength\itemsep{0.2em}
\item[1.] $\mathcal{T}\cap\mathbb{K}=\varnothing$, i.e., $\mathcal{T}$ does not contain any constant polynomial.
\item[2.] $\mathrm{lv}(T_{i'})\prec\mathrm{lv}(T_i)$ for all $1\leq i'<i\leq n$.
\end{itemize}
\end{definition}
We use the following notations:
\begin{align*}
\mathcal{T}^{(m)}&\triangleq\{T\in\mathcal{T}\mid\mathrm{lv}(T)\preceq x_m\},\\
\mathcal{T}^{\langle m\rangle}&\triangleq\{T\in\mathcal{T}\mid\mathrm{lv}(T)=x_m\}.
\end{align*}
The pseudo-remainder $\mathrm{prem}(P,\mathcal{T})$ of any polynomial $P\in\mathbb{K}[\boldsymbol{x}]$ w.r.t. $\mathcal{T}$ is defined recursively as
\begin{equation*}
\mathrm{prem}(P,\mathcal{T})\triangleq\mathrm{prem}\left(\mathrm{prem}(P,T_n,\mathrm{lv}(T_n)),\mathcal{T}^{(n-1)}\right),
\end{equation*}
where $\mathrm{prem}(P,\varnothing)\triangleq P$. 
Similarly, we define $\mathrm{res}(P,\mathcal{T})\triangleq\mathrm{res}\left(\mathrm{res}(P,T_n,\mathrm{lv}(T_n)),\mathcal{T}^{(n-1)}\right)$, where $\mathrm{res}(P,\varnothing)\triangleq P$. 
If $\mathrm{res}(P,\mathcal{T})\neq 0$ then $\mathrm{Zero}(\{P\})$ and $\mathrm{Zero}(\mathcal{T})$ have no elements in common, i.e., $P$ has no common solution with the system $\{\mathcal{T}=0\}$.
\begin{definition}
A polynomial set $\mathcal{T}=\{T_1,T_2,\dots,T_n\} \subset \mathbb{K}[\boldsymbol{x}]$ is called a regular set if
\begin{itemize}
\setlength\itemsep{0.2em}
\item[1.] $\mathcal{T}$ is a triangular set.
\item[2.] $\mathrm{res}(\mathrm{ini}(T_i),\{T_1,\dots,T_{i-1}\}) \neq 0$ for all $1\leq i\leq n$, i.e., the leading coefficient of $T_i$ is non-zero when $x_1=\bar{x}_1,\dots,x_{i-1}=\bar{x}_{i-1}$ are substituted. Here, $\{\bar{x}_1,\dots,\bar{x}_{i-1}\}$ is a solution of $\mathcal{T}^{(i-1)}=0$.
\end{itemize}
\end{definition}
For any two polynomial sets $\mathcal{P}$ and $\mathcal{Q}$, if $\mathcal{P}$ is a regular set then we call $\{\mathcal{P},\mathcal{Q}\}$ a regular system.

\section{Methods}\label{sect.met}
\subsection{Moment equations}
Because obtaining analytical probability distributions that satisfy the CME is very difficult in most cases, their moments (e.g., mean and variance) are often computed to capture the behaviors of stochastic reaction networks. 
For obtaining moment equations, two methods are commonly used: moment closure approximation and linear noise approximation (LNA). 
For linear systems where all propensity functions of the system are linear, explicit expressions for moments can be obtained. 
However, for nonlinear systems, this is not the case because each moment depends on higher moments, resulting in an infinite hierarchy of moment equations. 
For that reason, several approximations have been proposed to obtain closed moment equations. 
These approaches approximate higher order moments by lower order ones assuming probability distributions (normal distribution \cite{Carlos.2007,Ramon.2012,Angelique.2013}, lognormal distribution \cite{Matt.2000,Nasell.2003}, Poisson distribution \cite{Nasell.2003}) or totally ignore the higher order moments (central moment neglect \cite{Ferm.2008,Gillespie.2009,Chang.2009}). 
If moment equations are truncated at the second order, i.e., only means and variances (covariances) are considered, approximations based on normal distribution and central moment neglect become equivalent. 
It has been reported that the moment closure approximation based on normal distribution is advantageous over others, which provides a larger range of parameter space where it gives physically meaningful results \cite{David.2015}. 
For that reason, we adopt here the approximations based on normal distribution and LNA to obtain moment equations.
We stress that in our method, the approximation scheme can be replaced by other better ones if available without affecting the method pipeline.

\subsubsection{Approximation based on normal distribution}
We define the first two moments, i.e., means $\boldsymbol{\mu}$ and variances $\boldsymbol{\sigma}$, as follows:
\begin{align}
\mu_{i}&=\langle n_i\rangle=\sum_{\boldsymbol{n}}n_iP(\boldsymbol{n},t),\\
\sigma_{ii'}&=\langle (n_i-\mu_{i})(n_{i'}-\mu_{i'})\rangle\nonumber\\
&=\sum_{\boldsymbol{n}}(n_i-\mu_{i})(n_{i'}-\mu_{i'})P(\boldsymbol{n},t),
\end{align}
and set all central moments above order two to zero. After some transformations (see Appendix \ref{app.moment.equations}), the equations of the first two moments are
\begin{equation}
\label{eq.moments}
\begin{aligned}
\frac{d\mu_{i}}{dt}&=\sum_{j=1}^{M}V_{ij}\Bigg(f_j(\boldsymbol{\mu})+\frac{1}{2}\sum_{h,l}\frac{\partial^2 f_j(\boldsymbol{\mu})}{\partial n_h\partial n_l}\sigma_{hl}\Bigg)\\
\frac{d\sigma_{ii'}}{dt}&=\sum_{j=1}^{M}\Bigg(V_{ij}V_{i'j}\Bigg(f_j(\boldsymbol{\mu})+\frac{1}{2}\sum_{h,l}\frac{\partial^2 f_j(\boldsymbol{\mu})}{\partial n_h\partial n_l}\sigma_{hl}\Bigg)\\
&+V_{i'j}\sum_{l}\frac{\partial f_j(\boldsymbol{\mu})}{\partial n_l}\sigma_{il}+V_{ij}\sum_{l}\frac{\partial f_j(\boldsymbol{\mu})}{\partial n_l}\sigma_{i'l}\Bigg).
\end{aligned}
\end{equation}
At the steady state, all moments are time-invariant and moment equations form an algebraic variety as below:
\begin{align}
&\sum_{j=1}^{M}V_{ij}\Bigg(f_j(\boldsymbol{\mu})+\frac{1}{2}\sum_{h,l}\frac{\partial^2 f_j(\boldsymbol{\mu})}{\partial n_h\partial n_l}\sigma_{hl}\Bigg)=0,~\forall i=1,\dots,N,\nonumber\\
&\sum_{j=1}^{M}\Bigg(V_{ij}V_{i'j}\Bigg(f_j(\boldsymbol{\mu})+\frac{1}{2}\sum_{h,l}\frac{\partial^2 f_j(\boldsymbol{\mu})}{\partial n_h\partial n_l}\sigma_{hl}\Bigg)\label{eq.moments.steady.states}\\
&+V_{i'j}\sum_{l}\frac{\partial f_j(\boldsymbol{\mu})}{\partial n_l}\sigma_{il}+V_{ij}\sum_{l}\frac{\partial f_j(\boldsymbol{\mu})}{\partial n_l}\sigma_{i'l}\Bigg)=0,\nonumber\\
&\hspace{60mm}\forall i,i'=1,\dots,N.\nonumber
\end{align}

\subsubsection{Linear noise approximation}
The CME can be simplified in the LNA through $\Omega$-expansion \cite{Kampen.2007}. 
To derive the LNA, we approximate the CME by a Taylor expansion of the state variables $\boldsymbol{n}$ around the mean values $\boldsymbol{\mu}$ of the concentrations.
The fluctuations in $\boldsymbol{n}$ are assumed to be of the order of $\mathcal{O}(\Omega^{1/2})$. Then $\boldsymbol{n}$ can be expressed as
\begin{equation}
\boldsymbol{n}=\Omega\boldsymbol{\mu}+\sqrt{\Omega}\boldsymbol{\xi},
\end{equation}
where $\boldsymbol{\xi}$ are the fluctuating variables.
The CME can be rewritten as
\begin{equation}
\frac{\partial P(\boldsymbol{n},t)}{\partial t}=\sum_{j=1}^{M}\left[\left(\prod_{i=1}^{N}\mathbb{E}_i^{-V_{ij}} \right)-1\right]f_{j}(\boldsymbol{n},\Omega)P(\boldsymbol{n},t),\label{eq.master.operator}
\end{equation}
where $\mathbb{E}_i^v$ is an operator that replaces $n_i$ by $n_i+v$. The probability distribution $P(\boldsymbol{n},t)$ can be replaced by the distribution of fluctuations $\Pi(\boldsymbol{\xi},t)$ as $P(\boldsymbol{n},t)=\Omega^{-N/2}\Pi(\boldsymbol{\xi},t)$. By using the Taylor expansion, we can approximate the operator $\mathbb{E}_i^v$ as follows:
\begin{equation}
\mathbb{E}_i^{v}\approx\left[1+\frac{v}{\sqrt{\Omega}}\frac{\partial}{\partial\xi_i}+\frac{v^2}{2\Omega}\frac{\partial^2}{\partial\xi_i^2}+\dots\right].\label{eq.operator.approximation}
\end{equation}
Substituting the approximation in Eq.~\eqref{eq.operator.approximation} into Eq.~\eqref{eq.master.operator} and collecting the terms of the order of $\mathcal{O}(\Omega^0)$ and $\mathcal{O}(\Omega^{-1/2})$, we obtain the rate equations for the means and the Fokker-Planck equation for the fluctuations as
\begin{align}
\frac{d\boldsymbol{\mu}}{dt}&=\boldsymbol{V}\boldsymbol{\nu},\\
\frac{\partial\Pi}{\partial t}&=-\sum_{i,j}\Gamma_{ij}\partial_i(\xi_i\Pi)+\frac{1}{2}\sum_{i,j}D_{ij}\partial_{ij}\Pi, \label{eq.fluctuation.fokker.planck}
\end{align}
where $\partial_i\equiv\partial/\partial\xi_i,~\boldsymbol{\nu}_j(\boldsymbol{\mu})=\lim_{\Omega\to\infty}f_j(\boldsymbol{n})/\Omega$ and
\begin{equation}
\Gamma_{ij}=\left.\frac{\partial \left[\boldsymbol{V}\boldsymbol{\nu}\right]_i}{\partial \mu_j}\right|_{\boldsymbol{\mu}},\; \boldsymbol{D}=\boldsymbol{V}\;\mathrm{diag}[\boldsymbol{\nu}]\;\boldsymbol{V}^{\top}.
\end{equation}
Multiplying Eq.~\eqref{eq.fluctuation.fokker.planck} with $\xi_i\xi_j$ and integrating by parts gives the following equations of time evolution of variances:
\begin{equation}
\frac{d\boldsymbol{\Sigma}}{dt}=\boldsymbol{\Gamma}\boldsymbol{\Sigma}+\boldsymbol{\Sigma}\boldsymbol{\Gamma}^{\top}+\boldsymbol{D},
\end{equation}
where $\Sigma_{ij}=\langle\xi_i\xi_j\rangle$. We note that the means and variances of molecule numbers can be calculated via $\boldsymbol{\mu},\boldsymbol{\Sigma}$ as $\langle\boldsymbol{n}\rangle=\Omega\boldsymbol{\mu},~\langle (\boldsymbol{n}-\langle \boldsymbol{n}\rangle)(\boldsymbol{n}-\langle \boldsymbol{n}\rangle)^{\top}\rangle=\Omega\boldsymbol{\Sigma}$. At the steady state, the means and variances satisfy the following equations:
\begin{equation}
\boldsymbol{V}\boldsymbol{\nu}=0,\;\boldsymbol{\Gamma}\boldsymbol{\Sigma}+\boldsymbol{\Sigma}\boldsymbol{\Gamma}^{\top}+\boldsymbol{D}=0.
\end{equation}

Because the moment closure yields approximate equations, the solutions of the variety defined by Eq.~\eqref{eq.moments.steady.states} may contain invalid steady states, including negative means or variances. 
Moreover, the variance should be bounded from above by a positive constant or a function of mean to reflect the noise level. 
Consequently, for all $1\leq i\leq N$, we add into the variety the following inequalities: $\mu_i>0,\sigma_{ii}>0$ to ensure that means and variances are positive and $c_i-\sigma_{ii}>0$ or $d_i\mu_i-\sigma_{ii}>0$ to control the noise intensity at the steady state. 
Here, $c_i, d_i$ are positive constants determined by the user. 
For stochastic dynamics, an individual trajectory may undergo large fluctuations while its moment dynamics are time-invariant, which cannot be regarded as a stable steady state from a practical viewpoint. 
These inequality constraints exclude such cases. 
The value $d_i$ can be interpreted as an upper bound of the Fano factor, which is a noise measure that correlates with the distribution width. 
One can also add the inequality $d_i\mu_i^2-\sigma_{ii}>0$ to control the coefficient of variation, which is a measure of the dispersion of a probability distribution. 
Finally, we obtain a semi-algebraic variety where variables are means $\boldsymbol{\mu}$ and variances $\boldsymbol{\sigma}$, and parameters are macroscopic rates of reactions $k_1,k_2,\dots,k_M$. 
The problem is transformed to obtaining conditions for parameters $\boldsymbol{k}$ such that the semi-algebraic variety has exactly one real solution. 
By renaming variables $\boldsymbol{x}=[\boldsymbol{\mu},\boldsymbol{\sigma}]$ and parameters $\boldsymbol{k}=[k_1,k_2,\dots,k_M]$, the semi-algebraic variety can be viewed in the following form:
\begin{equation}\label{eq.semivar.pq}
\begin{cases}
P_1(\boldsymbol{x},\boldsymbol{k})=0,\dots,P_{N_p}(\boldsymbol{x},\boldsymbol{k})=0 \\
Q_1(\boldsymbol{x},\boldsymbol{k})>0,\dots,Q_{N_q}(\boldsymbol{x},\boldsymbol{k})>0
\end{cases}.
\end{equation}
Letting $\mathcal{P}=\{P_1,\dots,P_{N_p}\},\mathcal{Q}=\{Q_1,\dots,Q_{N_q}\}$, we describe semi-algebraic variety of Eq.~\eqref{eq.semivar.pq} as
\begin{equation}\label{eq.semivar}
\mathcal{P}=0,\mathcal{Q}>0.
\end{equation}
If a stochastic reaction network involves $N$ reactant species, then the number of equations is $N_p=N(N+3)/2$, and the number of inequations is $N_q\geq 3N$.

\subsection{Parameter analysis}
We describe an algebraic method to analyze the conditions of parameters such that the semi-algebraic variety defined by Eq.~\eqref{eq.semivar} has exactly one real solution.

\subsubsection{Real solutions in a specific regular system}
If the parameters of the regular system $\{\mathcal{P},\mathcal{Q}\}$ are assigned specific values, then the distinct real solutions of $\{\mathcal{P}=0,\mathcal{Q}>0\}$ can be calculated precisely. 
Suppose that the polynomial set $\mathcal{P}$ has the form $\{P_1,P_2,\dots,P_{N_p}\}$, here $P_i\in\mathbb{R}[x_1,\dots,x_{N_p}],\forall i=1,\dots,N_p$. 
First, we compute distinct real roots $r_1,\dots,r_n$ of polynomial $P_1(x_1)$ and substitute each root $x_1=r_i$ into $\mathcal{P}$ to obtain a new regular set $\mathcal{P}^{(i)}=\{P_2^{(i)},P_3^{(i)},\dots,P_{N_p}^{(i)}\}\subset\mathbb{R}[x_2,\dots,x_{N_p}]$. 
The regular set $\mathcal{P}^{(i)}$ has $N_p-1$ polynomials and if $\{x_2=\alpha_2,\dots,x_{N_p}=\alpha_{N_p}\}$ is a solution of $\mathcal{P}^{(i)}=0$ then $\{x_1=r_i,x_2=\alpha_2,\dots,x_{N_p}=\alpha_{N_p}\}$ is a solution of $\mathcal{P}=0$. 
By repeating this procedure, one can acquire all distinct real solutions of $\mathcal{P}=0$. 
Finally, we only need to examine whether a solution satisfies the conditions $\mathcal{Q}>0$. 
The algorithm $\textbf{ExactSolve}$, which counts the number of distinct real solutions of a regular system, is described in Algorithm~\ref{alg.exactsolve}.

\subsubsection{Regular system decomposition}
A common way to analyze or solve a polynomial system is by computing the triangular decomposition of the system. 
It is known that a semi-algebraic system can be decomposed into several regular systems \cite{Wang.1998,Wang.2000}. 
If we let $\mathcal{P}=\{P_1,\dots,P_{N_p}\},\mathcal{Q}=\{Q_1,\dots,Q_{N_q}\}$ be sets of multivariate polynomials, then one can decompose the set $\mathcal{P}$ into a finite number of regular sets $\mathcal{T}_1,\dots,\mathcal{T}_{J}$ such that
\begin{equation}\label{eq.decomp}
\mathrm{Zero}(\mathcal{P}\setminus\mathcal{Q})=\bigcup_{i=1}^{J}\mathrm{Zero}(\mathcal{T}_i\setminus\mathcal{Q}).
\end{equation}
In this paper, we adopt the algorithm \textbf{RegSer} proposed by Wang \cite{Wang.2000} to decompose the polynomial system. 
To speed up this algorithm, we propose a parallel algorithm that decomposes each polynomial system. 
A sequential algorithm based on \textbf{RegSer} is executed inside this parallel algorithm. 
However, it can be observed that during the decomposition process the algorithm \textbf{RegSer} produces many polynomial systems $\{\mathcal{P},\mathcal{Q}\}$, which has the property $\mathrm{Zero}(\mathcal{P}\setminus\mathcal{Q})=\varnothing$. 
To reduce computation and make the algorithm more efficient, we add a probabilistic test into the sequential decomposition. 
This test probabilistically eliminates polynomial systems that have no complex solution with some specific parameter values. 
The details of each algorithm can be seen in Algorithm~\ref{alg.paradecomp},~\ref{alg.probtest},~\ref{alg.seqdecomp}.
{
\begin{algorithm}[H]
\begin{algorithmic}[1]
\Require A polynomial system $\{\mathcal{P},\mathcal{Q}\}$ in $\mathbb{K}[\boldsymbol{x]}$
\Ensure List of regular sets $\Theta=\left[\mathcal{T}_1,\dots,\mathcal{T}_{J}\right]$
\State set $\Phi \gets [\{\mathcal{P},\mathcal{Q},N_p\}], \Psi \gets \varnothing$
\While {$\Phi \neq \varnothing$}\label{step.beginOfWhile}
\State parallel execute sequential decomposition for each $1\leq i\leq |\Phi|$
\begin{equation*}
[\Phi^{(i)},\Psi^{(i)}]=\mathrm{SequentialDecomposition}(\Phi[i])
\end{equation*}
\State update $\Phi=\bigcup_{i}\Phi^{(i)},~\Psi=\Psi\cup\bigcup_{i}\Psi^{(i)}$
\EndWhile
\State set $\Theta\gets\left[\mathcal{T}_1,\dots,\mathcal{T}_{J}\right]$, here $\Psi=\left[\{\mathcal{T}_1,\mathcal{U}_1\},\dots,\{\mathcal{T}_{J},\mathcal{U}_{J}\}\right]$
\State \Return $\Theta$
\end{algorithmic}
\caption{ParallelDecomposition}\label{alg.paradecomp}
\end{algorithm}
}

To solve the parameter analysis problem, we compute the \textit{border polynomial} $B(\boldsymbol{k})$ \cite{Yang.2001,Yang.2005}, which has the following property: the number of distinct real solutions of the variety defined by Eq.~\eqref{eq.semivar} is invariant over each connected component of the complement of $B(\boldsymbol{k})=0$ in the parameter space. 
First, we define the border polynomial of a regular system as follows.
\begin{definition}
Suppose a regular system $\{\mathcal{P},\mathcal{Q}\}$ is given, then the border polynomial $B_{\{\mathcal{P},\mathcal{Q}\}}(\boldsymbol{k})$ of this system is defined as follows:
\begin{equation*}
B_{\{\mathcal{P},\mathcal{Q}\}}(\boldsymbol{k})\triangleq\prod_{n=1}^{N_p}BP_{n}(\boldsymbol{k})\times\prod_{m=1}^{N_q}BQ_{m}(\boldsymbol{k}),
\end{equation*}
where for all $1\leq n\leq N_p,~1\leq m\leq N_q,$
\begin{align*}
BP_{n}(\boldsymbol{k})&=\mathrm{res}(\mathrm{res}(P_{n},P'_{n},x_{n}),\{P_{1},\dots,P_{n-1}\}),\\
BQ_{m}(\boldsymbol{k})&=\mathrm{res}(Q_{m},\mathcal{P}).
\end{align*}
\end{definition}
For arbitrary semi-algebraic varieties, the border polynomial is defined as follows.
\begin{definition}
Given a semi-algebraic variety as shown in Eq.~\eqref{eq.semivar}, assume that the polynomial set $\mathcal{P}$ is decomposed as Eq.~\eqref{eq.decomp}, then the border polynomial $B(\boldsymbol{k})$ of the variety will be
\begin{equation*}
B(\boldsymbol{k})\triangleq\prod_{i=1}^{J}B_{\{\mathcal{T}_i,\mathcal{Q}\}}(\boldsymbol{k}).
\end{equation*}
\end{definition}
\begin{figure*}[t]
\centering
\includegraphics[width=0.9\textwidth]{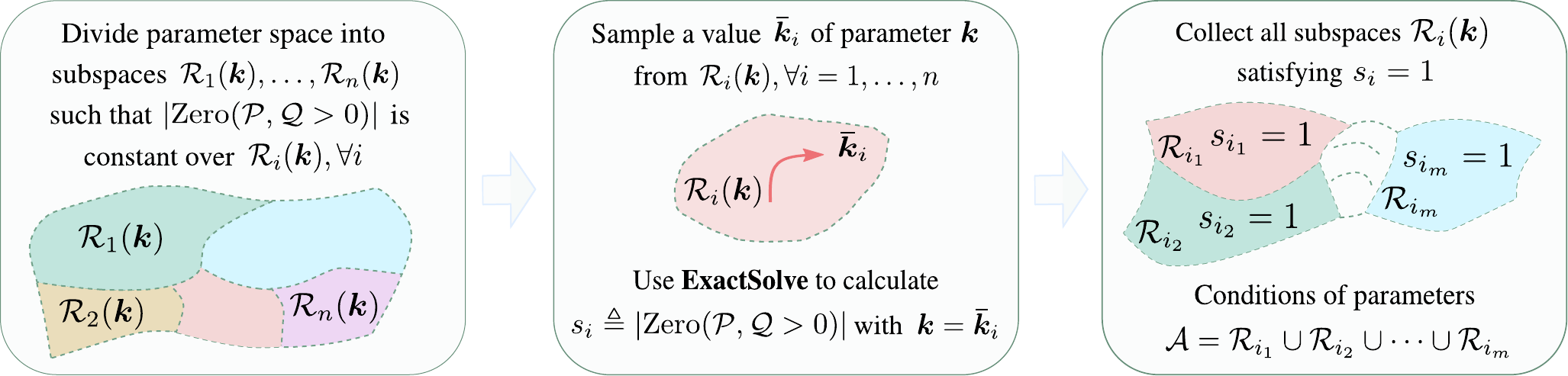}
\caption{(Color online) Outline of parameter analysis in our method. First, as in the left panel, parameter space is divided into a finite number of subspaces such that the number of distinct real solutions of $\{\mathcal{P}=0,\mathcal{Q}>0\}$ is invariant over each subspace. Next, the middle panel shows that a specific value of the parameter is sampled from each subspace to calculate the number of solutions by exploiting the algorithm \textbf{ExactSolve}. Finally, all satisfied subspaces are gathered, and one obtains the conditions of parameters as in the right panel.}\label{fig.parameter-analysis}
\end{figure*}
In the following lemma and theorem, we prove the above-stated property of the border polynomial.
\begin{lemma}\label{lem.regsys}
If a regular system $\{\mathcal{P},\mathcal{Q}\}$ satisfies that $B_{\{\mathcal{P},\mathcal{Q}\}}(\boldsymbol{k})\neq 0$ for all $\boldsymbol{k}\in\mathcal{R}$, where $\mathcal{R}$ is a continuous region, then $\left|\mathrm{Zero}(\mathcal{P},\mathcal{Q}>0)\right|$ is invariant over $\mathcal{R}$.
\end{lemma}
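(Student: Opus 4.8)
The plan is to show that, as $\boldsymbol{k}$ ranges over the connected region $\mathcal{R}$, neither the real zeros of $\mathcal{P}=0$ nor the signs taken by the $Q_m$ on those zeros can change; together these two constancy statements force $\left|\mathrm{Zero}(\mathcal{P},\mathcal{Q}>0)\right|$ to be constant. The two products defining $B_{\{\mathcal{P},\mathcal{Q}\}}$ are precisely the obstructions to be excluded: the factors $BP_n$ detect coincidences (double roots) that would let real zeros merge, split, or exchange with complex conjugate pairs, while the factors $BQ_m$ detect a real zero of $\mathcal{P}$ landing on a boundary $Q_m=0$.

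Since $\{\mathcal{P},\mathcal{Q}\}$ is a regular system I order $\mathcal{P}$ so that $\mathrm{lv}(P_n)=x_n$ and $P_n\in\mathbb{R}[x_1,\dots,x_n]$, with coefficients depending continuously (in fact polynomially) on $\boldsymbol{k}$. I then argue by induction on $n$ that the real zeros of $\{P_1,\dots,P_n\}=0$ can be organized into finitely many continuous branches over $\mathcal{R}$ whose number is constant. For the base case, $BP_1=\mathrm{res}(P_1,P_1',x_1)$ is nonzero on $\mathcal{R}$; writing this inner resultant as $\pm\,\mathrm{ini}(P_1)\,\mathrm{disc}(P_1,x_1)$ shows simultaneously that $\mathrm{deg}(P_1,x_1)$ is preserved (no root escapes to infinity) and that $P_1$ has only simple roots, so its distinct real roots vary continuously and cannot appear, vanish, or become complex without first forming a double root, which is forbidden. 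For the inductive step, given a continuous branch $\boldsymbol{\alpha}(\boldsymbol{k})$ of $\{P_1,\dots,P_{n-1}\}=0$, the hypothesis $BP_n=\mathrm{res}(\mathrm{res}(P_n,P_n',x_n),\{P_1,\dots,P_{n-1}\})\neq 0$ states that $\mathrm{res}(P_n,P_n',x_n)$ does not vanish at any zero of $\{P_1,\dots,P_{n-1}\}$; hence $P_n(\boldsymbol{\alpha}(\boldsymbol{k}),x_n,\boldsymbol{k})$ keeps full degree in $x_n$ and has only simple roots, and the same continuity argument extends $\boldsymbol{\alpha}$ to branches in $x_n$ without altering their count. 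Connectedness of $\mathcal{R}$ upgrades local constancy to global constancy.

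Let $\boldsymbol{\alpha}_1(\boldsymbol{k}),\dots,\boldsymbol{\alpha}_s(\boldsymbol{k})$ be the resulting continuous branches of $\mathcal{P}=0$. For each $Q_{m'}$, the condition $BQ_{m'}=\mathrm{res}(Q_{m'},\mathcal{P})\neq 0$ on $\mathcal{R}$ means $\mathcal{P}=0$ and $Q_{m'}=0$ share no common zero there, so $Q_{m'}(\boldsymbol{\alpha}_j(\boldsymbol{k}),\boldsymbol{k})$ is a continuous, nowhere-vanishing function on the connected set $\mathcal{R}$ and therefore keeps a fixed sign. Consequently the subset of branches on which every $Q_m$ is positive is the same for all $\boldsymbol{k}\in\mathcal{R}$, which is exactly the assertion that $\left|\mathrm{Zero}(\mathcal{P},\mathcal{Q}>0)\right|$ is invariant over $\mathcal{R}$.

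The main obstacle is the continuity-and-counting argument underpinning the induction, and in particular the bookkeeping of degrees through the triangular tower. The crucial point I rely on is that a real univariate polynomial of fixed degree with no multiple root has a locally constant number of real roots, all depending continuously on the coefficients; I must make sure that this ``fixed degree'' hypothesis truly holds along $\mathcal{R}$, i.e.\ that the relevant leading coefficients never degenerate. This is where the regular-system structure does real work, and it is reassuring that the factorization $\mathrm{res}(P_n,P_n',x_n)=\pm\,\mathrm{ini}(P_n)\,\mathrm{disc}(P_n,x_n)$ lets the single condition $BP_n\neq 0$ enforce both non-degeneracy of the leading coefficient and absence of double roots on the lower zero set; verifying that the iterated resultants propagate this correctly up the tower, without spurious cancellations from formal-degree drops, is the delicate step to handle with care.
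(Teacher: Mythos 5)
Your proof is correct and follows essentially the same route as the paper's: the nonvanishing of the factors $BP_n$ guarantees that the real zeros of $\mathcal{P}=0$ form a fixed number of continuous branches over $\mathcal{R}$, and the nonvanishing of the factors $BQ_m$ together with the intermediate value theorem forces each $Q_m$ to keep a constant sign along each branch. The only difference is that you supply the inductive, discriminant-based justification for the first step (constancy of $\left|\mathrm{Zero}(\mathcal{P})\right|$ and continuity of the branches), which the paper simply asserts.
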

\begin{proof}
As $B_{\{\mathcal{P},\mathcal{Q}\}}(\boldsymbol{k})=\prod_{n=1}^{N_p}BP_{n}(\boldsymbol{k})\times\prod_{m=1}^{N_q}BQ_{m}(\boldsymbol{k})\neq 0$, $BP_{n}(\boldsymbol{k})\neq 0$ and $BQ_{m}(\boldsymbol{k})\neq 0~\forall\boldsymbol{k}\in\mathcal{R}$. 
The conditions $BP_{n}(\boldsymbol{k})\neq 0~(n=1,\dots,N_p)$ imply that the number of distinct real solutions of $\{\mathcal{P}=0\}$ is invariant, while $BQ_{m}(\boldsymbol{k})\neq 0~(m=1,\dots,N_q)$ indicate that each polynomial $Q_{m}$ has no common solution with the system $\{\mathcal{P}=0\}$. 
Let $\boldsymbol{x}_1({\bar{\boldsymbol{k}}}),\dots,\boldsymbol{x}_r({\bar{\boldsymbol{k}}})$ be the distinct real solutions of $\{\mathcal{P}=0\}$ when $\boldsymbol{k}=\bar{\boldsymbol{k}}$. 
It can be seen that $\boldsymbol{x}_i(\bar{\boldsymbol{k}})~(i=1,\dots,r)$ are continuous functions of $\bar{\boldsymbol{k}}$ over region $\mathcal{R}$. 
Assume that indexes $1\leq m\leq N_q,1\leq i\leq r$ and parameters $\boldsymbol{k}_1\neq\boldsymbol{k}_2\in\mathcal{R}$ such that $Q_{m}(\boldsymbol{x}_i({\boldsymbol{k}_1}),\boldsymbol{k}_1)Q_{m}(\boldsymbol{x}_i({\boldsymbol{k}_2}),\boldsymbol{k}_2)<0$ exist. 
Since $Q_{m}(\boldsymbol{x}_i({\boldsymbol{k}}),\boldsymbol{k})$ is a continuous function of $\boldsymbol{k}$, this inequality implies that there exists $\bar{\boldsymbol{k}}\in\mathcal{R}$ such that $Q_{m}(\boldsymbol{x}_i({\bar{\boldsymbol{k}}}),\bar{\boldsymbol{k}})=0$. 
It means that $Q_{m}$ has a common solution with the system $\{\mathcal{P}=0\}$ and is contradictory. 
Therefore, the sign of $Q_{m}(\boldsymbol{x}_i({\boldsymbol{k}}),\boldsymbol{k})$ does not change on region $\mathcal{R}$ for all $m=1,\dots,N_q$ and $i=1,\dots,r$. 
Consequently, $\left|\mathrm{Zero}(\mathcal{P},\mathcal{Q}>0)\right|$ is invariant over region $\mathcal{R}$.
\end{proof}
\begin{theorem}\label{theo.invariant.number.solution}
Let $B(\boldsymbol{k})$ be the border polynomial of the variety defined by Eq.~\eqref{eq.semivar} and $\mathcal{R}$ be a continuous region in the parameter space such that $B(\boldsymbol{k})\neq 0~\forall\boldsymbol{k}\in \mathcal{R}$. 
Then the number of distinct real solutions of the variety is invariant over $\mathcal{R}$.
\end{theorem}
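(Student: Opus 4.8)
The plan is to reduce the theorem to Lemma~\ref{lem.regsys} by combining the regular-set decomposition of Eq.~\eqref{eq.decomp} with the product structure of the border polynomial. First I would observe that since $B(\boldsymbol{k})=\prod_{i=1}^{J}B_{\{\mathcal{T}_i,\mathcal{Q}\}}(\boldsymbol{k})$, the hypothesis $B(\boldsymbol{k})\neq 0$ for all $\boldsymbol{k}\in\mathcal{R}$ forces every factor $B_{\{\mathcal{T}_i,\mathcal{Q}\}}(\boldsymbol{k})$ to be nonzero throughout $\mathcal{R}$. Applying Lemma~\ref{lem.regsys} to each regular system $\{\mathcal{T}_i,\mathcal{Q}\}$ separately then gives that each count $\left|\mathrm{Zero}(\mathcal{T}_i,\mathcal{Q}>0)\right|$ is invariant over $\mathcal{R}$.

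Next I would transfer this from the individual regular systems to the full variety. The decomposition Eq.~\eqref{eq.decomp} gives $\mathrm{Zero}(\mathcal{P}\setminus\mathcal{Q})=\bigcup_{i=1}^{J}\mathrm{Zero}(\mathcal{T}_i\setminus\mathcal{Q})$, while the factors $BQ_m(\boldsymbol{k})=\mathrm{res}(Q_m,\mathcal{P})$ dividing $B$ guarantee, exactly as in the proof of Lemma~\ref{lem.regsys}, that for $\boldsymbol{k}\in\mathcal{R}$ no real solution of $\mathcal{P}=0$ (hence of any $\mathcal{T}_i=0$) lies on a surface $Q_m=0$. Thus the sign of every $Q_m$ is well defined at each solution, and imposing $\mathcal{Q}>0$ simply selects, within each piece of the union, those branches on which all $Q_m$ are positive; since this selection distributes over the union I obtain the set identity $\mathrm{Zero}(\mathcal{P},\mathcal{Q}>0)=\bigcup_{i=1}^{J}\mathrm{Zero}(\mathcal{T}_i,\mathcal{Q}>0)$ of finite real point sets.

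Finally I would show that the cardinality $N(\boldsymbol{k})\triangleq\left|\mathrm{Zero}(\mathcal{P},\mathcal{Q}>0)\right|$ of this union is invariant. The cleanest route is to argue that $N$ is an integer-valued, locally constant function on the connected set $\mathcal{R}$, hence constant. Near any $\bar{\boldsymbol{k}}\in\mathcal{R}$ the real solutions of $\mathcal{P}=0$ vary continuously in $\boldsymbol{k}$: the factors $BP_n(\boldsymbol{k})=\mathrm{res}(\mathrm{res}(P_n,P'_n,x_n),\{P_1,\dots,P_{n-1}\})$ being nonzero rule out multiple roots and leading-coefficient degeneracies along the triangular structure, so within each regular set no branch is created, destroyed, or merged; and by the previous paragraph no branch ever crosses a $Q_m=0$ surface. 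Hence both the number of branches carrying $\mathcal{Q}>0$ and their membership in the various pieces cannot change in a neighborhood of $\bar{\boldsymbol{k}}$, yielding local constancy of $N$ and therefore $N\equiv\text{const}$ on $\mathcal{R}$.

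The step I expect to be the main obstacle is this last one, because the union in Eq.~\eqref{eq.decomp} need not be disjoint. If it were disjoint, then $N(\boldsymbol{k})=\sum_{i=1}^{J}\left|\mathrm{Zero}(\mathcal{T}_i,\mathcal{Q}>0)\right|$ would be a sum of the invariants produced in the first paragraph and invariance would be immediate; in general, however, $N$ is strictly smaller than this sum, so invariance of the summands alone does not suffice. The delicate point is to justify that the overlap pattern among the $\mathrm{Zero}(\mathcal{T}_i,\mathcal{Q}>0)$—equivalently, precisely which branches coming from different regular sets coincide—stays frozen over a connected component of $\{B\neq 0\}$. This is exactly what the continuity-plus-no-collision argument of the preceding paragraph must be made to supply, and it is where the proof requires the most care.
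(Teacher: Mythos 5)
Your proof follows the same route as the paper's: factor $B$ into the $B_{\{\mathcal{T}_i,\mathcal{Q}\}}$, apply Lemma~\ref{lem.regsys} to each regular system, and pass from the pieces to the whole variety via the decomposition of Eq.~\eqref{eq.decomp}. The paper closes the argument in one line by writing $\left|\mathrm{Zero}(\mathcal{P},\mathcal{Q}>0)\right|=\sum_{i=1}^{J}\left|\mathrm{Zero}(\mathcal{T}_i,\mathcal{Q}>0)\right|$, which tacitly assumes the sets $\mathrm{Zero}(\mathcal{T}_i,\mathcal{Q}>0)$ are pairwise disjoint; you are right that without disjointness the cardinality of a union is not the sum of the cardinalities of its terms, and you correctly single this out as the delicate point.

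However, the local-constancy argument you propose in its place does not close that gap. Continuity of the solution branches, non-vanishing of the $BP_n$, and non-crossing of the $Q_m=0$ surfaces control what happens \emph{within} each regular system, but none of these quantities sees the relation \emph{between} two different regular sets $\mathcal{T}_i$ and $\mathcal{T}_{i'}$: the border polynomial contains no cross-resultants of polynomials taken from distinct pieces of the decomposition. Two continuous branches coming from different pieces can therefore coincide at isolated parameter values and be distinct elsewhere (compare the branches $x=k$ and $x=k^2$, which meet only at $k=0$ and $k=1$) without any factor of $B$ vanishing, so the claim that ``the overlap pattern stays frozen'' is exactly the statement that neither your continuity argument nor the border polynomial delivers. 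The missing ingredient is a property of the decomposition algorithm itself: Wang's construction can be arranged so that the regular systems have pairwise disjoint zero sets, and it is this disjointness that the paper's summation step silently relies on. Once disjointness is granted, the theorem follows from Lemma~\ref{lem.regsys} exactly as in your first paragraph, and the machinery of your third and fourth paragraphs becomes unnecessary; without it, both your argument and the paper's remain incomplete at the same spot.
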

\begin{proof}
Since $B(\boldsymbol{k})=\prod_{i=1}^{J}B_{\{\mathcal{T}_i,\mathcal{Q}\}}(\boldsymbol{k})\neq 0$, it is obvious that $B_{\{\mathcal{T}_i,\mathcal{Q}\}}(\boldsymbol{k})\neq 0~\forall\boldsymbol{k}\in\mathcal{R}$. 
According to Lemma~\ref{lem.regsys}, we obtain the result that $\left|\mathrm{Zero}(\mathcal{T}_i,\mathcal{Q}>0)\right|$ is a constant over $\mathcal{R}$ for all $i=1,\dots,J$. 
Therefore, $\left|\mathrm{Zero}(\mathcal{P},\mathcal{Q}>0)\right|=\sum_{i=1}^{J}\left|\mathrm{Zero}(\mathcal{T}_i,\mathcal{Q}>0)\right|$ is also a constant. 
This means that the number of distinct real solutions of $\{\mathcal{P}=0,\mathcal{Q}>0\}$ is invariant over $\mathcal{R}$.
\end{proof}
After obtaining the border polynomial $B(\boldsymbol{k})$, one can use cylindrical algebraic decomposition \cite{Dennis.1984.1,Dennis.1984.2} to decompose the complement of $B(\boldsymbol{k})=0$ into finitely connected regions such that the sign of $B(\boldsymbol{k})$ does not change over each region. 
The boundaries of these regions are the algebraic surfaces on which $B(\boldsymbol{k})=0$ holds. 
According to Theorem \ref{theo.invariant.number.solution}, the number of distinct real solutions of the variety is invariant over each region. 
By sampling an arbitrary value of the parameters from each region and applying the algorithm \textbf{ExactSolve} to regular systems with specific parameters assigned, we can easily calculate the number of solutions of the variety in each region. 
Ultimately, we gather all satisfied regions and obtain the conditions of the parameters such that the variety has exactly one real solution. 
The schematic of our method is shown in Fig.~\ref{fig.parameter-analysis}.

Given all of these results, we propose the following algorithm to solve the problem of parameter analysis.
{
\begin{algorithm}[H]
\caption{ParameterAnalysis}\label{alg.parana}
\begin{algorithmic}[1]
\Require A polynomial system $\{\mathcal{P}=0,\mathcal{Q}>0\}$
\Ensure The conditions of parameters $\boldsymbol{k}$ such that the system has exactly one real solution.
\State set initial conditions of parameters $\mathcal{A}\gets\varnothing$
\State use algorithm \textbf{ParallelDecomposition} to decompose polynomial set $\mathcal{P}$ as Eq.~\eqref{eq.decomp}
\State compute the border polynomial $B(\boldsymbol{k})$ of the system $\{\mathcal{P},\mathcal{Q}\}$
\State decompose the complement of $B(\boldsymbol{k})=0$ into finitely connected cells $\mathcal{R}_1,\dots,\mathcal{R}_n$
\For{$i=1,\dots,n$}
\State sample an arbitrary point $\bar{\boldsymbol{k}}_i$ from cell $\mathcal{R}_i$
\State \begin{varwidth}[t]{0.87\linewidth}
substitute $\boldsymbol{k}=\bar{\boldsymbol{k}}_i$ into the system $\{\mathcal{P}=0,\mathcal{Q}>0\}$ and use algorithm \textbf{ExactSolve} to count the number of distinct real solutions $s_i\triangleq\left|\mathrm{Zero}(\mathcal{P},\mathcal{Q}>0)\right|$ 
\end{varwidth}
\State set $\mathcal{A}\gets\mathcal{A}\cup\mathcal{R}_i$ if $s_i=1$
\EndFor
\State \Return $\mathcal{A}$
\end{algorithmic}
\end{algorithm}
}

\section{Results}\label{sect.res}
Here, we illustrate how our proposed method works in typical stochastic reaction networks. 
For each case, we compute the results with the approximations based on normal distributions and the LNA individually.
To verify the validity of our proposed method, stochastic simulations are executed. 
For each specific parameter value, we run $10^6$ realizations of stochastic trajectories to obtain the means and variances at the steady state. 
First, we perform numerical simulations to find the boundaries of the regions of satisfied parameters. 
The boundary here is considered to be the place at which a given constraint condition becomes broken. 
After that, we uniformly sample many points from the parameter space to determine the interior of satisfied regions. 
By sequentially applying these procedures, we obtain the simulation results for each considered case.

\subsection{Gene regulatory system}
We consider a simple single gene regulatory system \cite{Thattai.2001,Hortsch.2018}. 
The system contains four reactions describing the transcription, translation, and degradation of the mRNA and protein as follows (Fig.~\ref{fig.gene}(a)):
\begin{equation}
\varnothing\mathrel{\mathop{\rightleftarrows}_{k_{1}}^{k_0}}M,M\xrightarrow{k_2}M+P,P\xrightarrow{k_3}\varnothing.
\end{equation}
Here we denote the mRNA and protein by $M$ and $P$, respectively, and
$k_i~(0\leq i\leq 3)$ denote the reaction rates.
The stoichiometric matrix $\boldsymbol{V}$ of the system is
\begin{equation*}
\setlength\arraycolsep{7pt}
\boldsymbol{V}=\begin{bmatrix}
1 & -1 & 0 & 0\\
0 & 0 & 1 & -1
\end{bmatrix}.
\end{equation*}
Let $n_M(t)$ and $n_P(t)$ be the copy numbers of $M$ and $P$, respectively, at time $t$. 
Assuming mass-action kinetics, the propensity functions are given by
\begin{equation*}
\boldsymbol{f}(n_M,n_P)=\left[\Omega k_0,k_1n_M,k_2n_M,k_3n_P\right]^{\top}.
\end{equation*}
 The corresponding CME is as follows:
\begin{equation}
\begin{aligned}
\frac{\partial P(n_M,n_P,t)}{\partial t}&=\Omega k_0P(n_M-1,n_P,t)\\
&\hspace{-1.8cm}+k_1(n_M+1)P(n_M+1,n_P,t)\\
&\hspace{-1.8cm}+k_2n_MP(n_M,n_P-1,t)\\
&\hspace{-1.8cm}+k_3(n_P+1)P(n_M,n_P+1,t)\\
&\hspace{-1.8cm}-(\Omega k_0+k_1n_M+k_2n_M+k_3n_P)P(n_M,n_P,t).
\end{aligned}
\end{equation}
It has been known that depending on the parameters $k_i~(0\leq i\leq 3)$, this system may exhibit large fluctuations while the means are invariant.
The Fano factors are often used to characterize the noise strength at the steady state.
The analytical results \cite{Thattai.2001} have revealed that the dynamics of the mRNA is a Poisson process, i.e., the Fano factor is 1, while the dynamic process of protein is super-Poissonian (the Fano factor is larger than 1).
Therefore, we put here emphasis on inequation constraints of the Fano factor of the protein.

We denote by $\mu_{P}$ and $\sigma_{P}$ the mean and variance, respectively, of the protein at the steady state.
We add a constraint $\sigma_P<2\mu_P$ to set an upper bound of the Fano factor of $P$.
We fix parameters $\Omega=100,k_0=1$ and analyze three cases: the conditions of parameters $(k_1,k_2)$ when $k_3=10$, of $(k_2,k_3)$ when $k_1=10$, and of $(k_1,k_3)$ when $k_2=10$ such that the steady-state distribution satisfies above constraint.
The analytical and numerical results of these three cases are shown in Figs.~\ref{fig.gene}(b)-(d), respectively, in which colored regions represent conditions obtained by the proposed method.
Specifically, the blue lines and violet dashed lines express the boundaries of the regions obtained with approximations based on normal distribution and the LNA, respectively.
The circles denote those with numerical simulations.
\begin{figure}[t]
\centering
\includegraphics[width=0.23\textwidth]{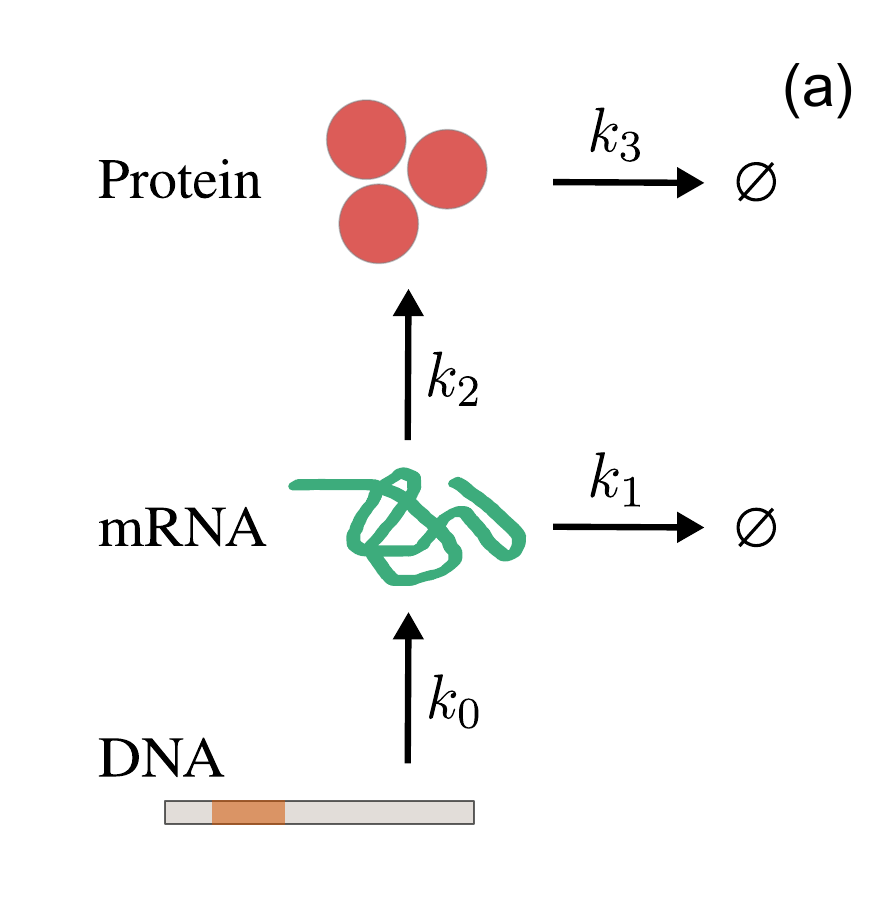}
\includegraphics[width=0.23\textwidth]{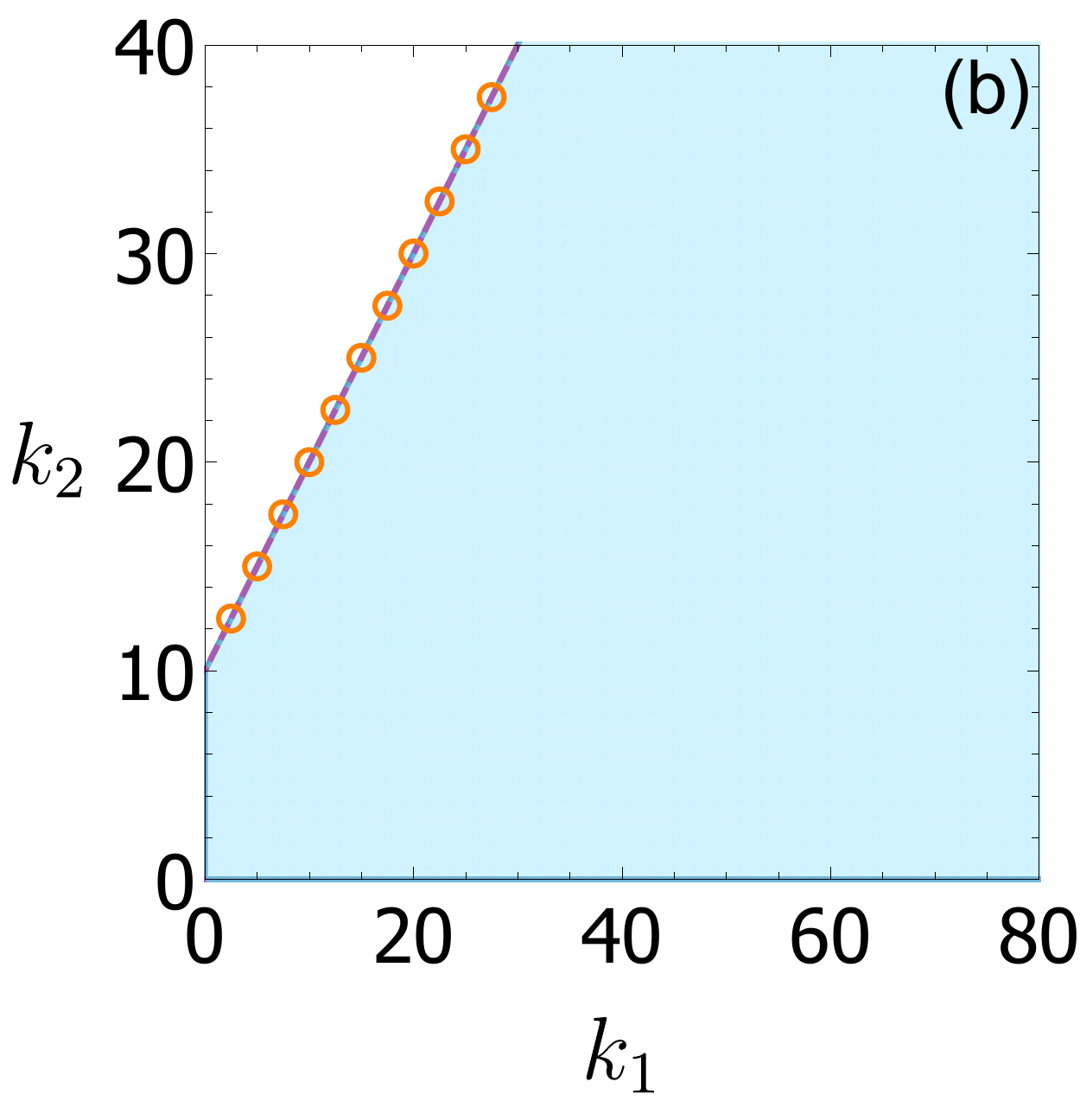}
\includegraphics[width=0.23\textwidth]{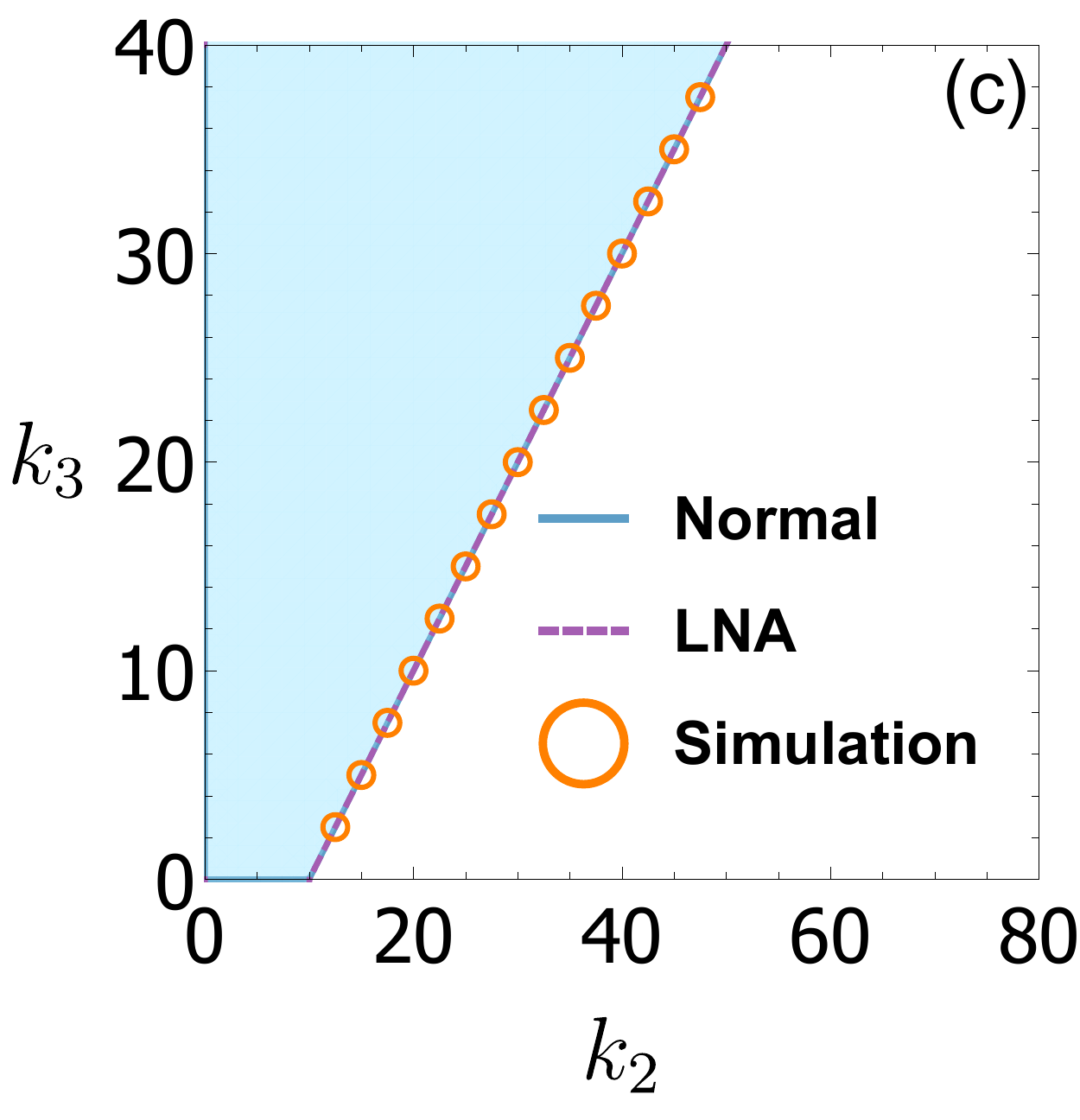}
\includegraphics[width=0.23\textwidth]{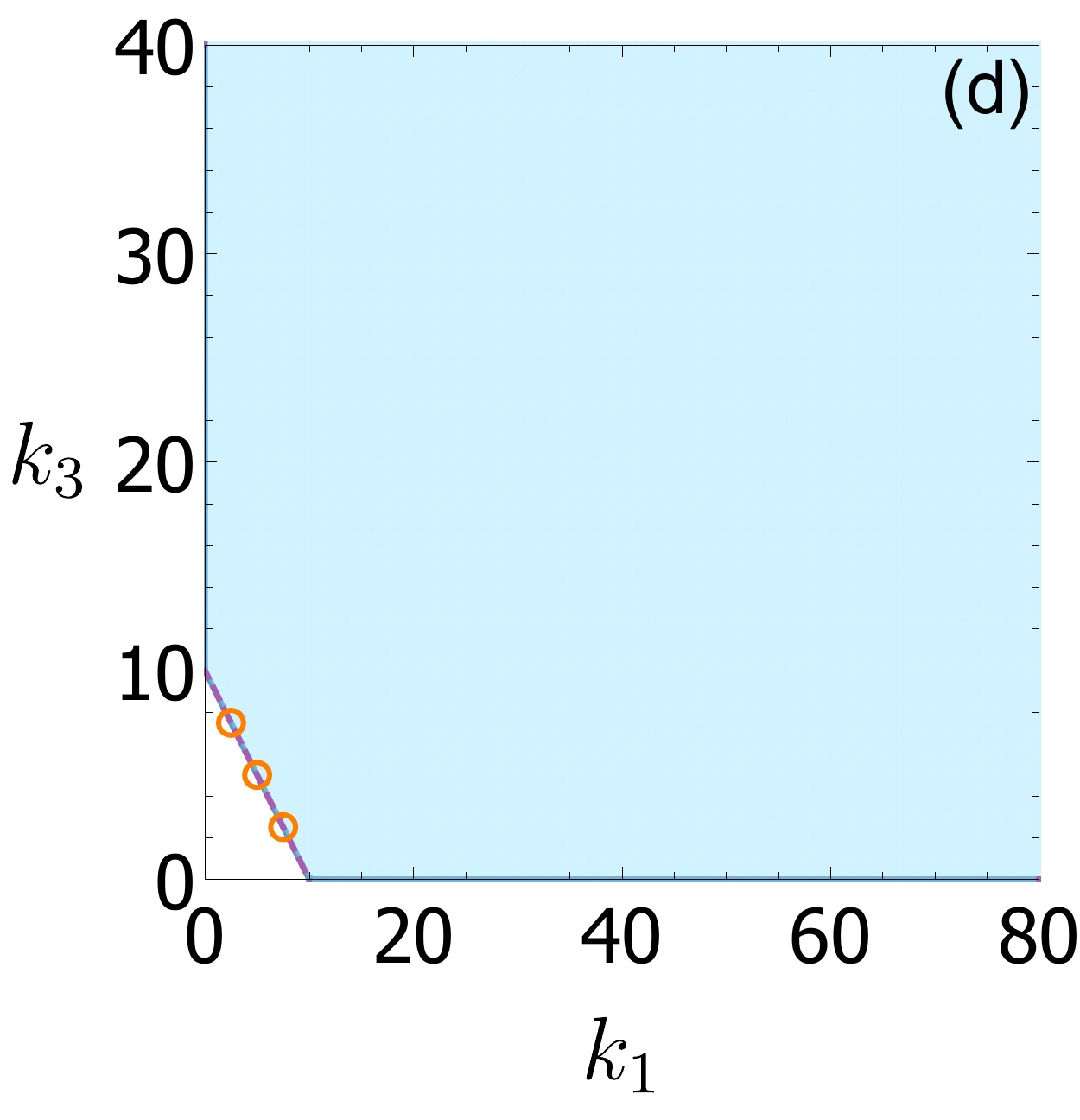}
\caption{(Color online) Parameter analysis of gene regularoty system with fixed parameters $\Omega=100,k_0=1$. Our analytical results (blue regions) represent the conditions of parameters such that the inequality $\sigma_P<2\mu_P$ is satisfied at the steady state. Blue lines express the boundaries of regions obtained with the approximation based on normal distribution. Violet dashed lines represent the counterpart obtained with the LNA. The simulation results (orange circles) indicate the boundaries of regions of satisfied parameters. Figures from left to right correspond to (a) schematic diagram of a simple gene regulatory system, conditions of parameters when (b) $k_3=10$, (c) $k_1=10$, and (d) $k_2=10$.}\label{fig.gene}
\end{figure}

It can be seen that our results agree with simulation results. 
Since the system is linear, the moment equations obtained with the approximation based on normal distribution and the LNA are same, and these equations are not approximate but exact ones.
Therefore, the results obtained with both of approximation schemes are identical with simulation.

\subsection{Michaelis--Menten enzyme reactions}
The Michaelis--Menten enzyme reactions can be described by
\begin{equation}
\varnothing\xrightarrow{k_0}S,\quad E+S\mathrel{\mathop{\rightleftarrows}_{k_{2}}^{k_1}} ES\xrightarrow{k_3} E+P,
\end{equation}
where $E,S,ES$, and $P$ represent the free enzyme, input substrate, enzyme-substrate complex, and product, respectively, and $k_i~(0\leq i\leq 3)$ denotes the reaction rates. 
Let $n_E(t),n_S(t),n_{ES}(t)$, and $n_P(t)$ be the molecule numbers of reactant species $E, S, ES$, and $P$, respectively, at time $t$. 
This reaction network has a conservation relation $n_E(t)+n_{ES}(t)=n_T$, where $n_T$ is a constant positive integer. 
This implies that the sum of the molecule numbers of $E$ and $ES$ is constant at all times. 
Therefore, the behavior of this system can be characterized by two variables $n_E(t)$ and $n_S(t)$. 
The stoichiometric matrix $\boldsymbol{V}$ is
\begin{equation*}
\setlength\arraycolsep{7pt}
\boldsymbol{V}=\begin{bmatrix}
0 & -1 & 1 & 1\\
1 & -1 & 1 & 0
\end{bmatrix}.
\end{equation*}
Assuming mass-action kinetics, the propensity functions are given by
\begin{equation}\label{eq.enzyme.propensity}
\begin{aligned}
\boldsymbol{f}(n_E,n_S)&=\\
&\hspace{-1cm}\left[\Omega k_0,~\Omega^{-1}k_1n_En_S,~k_2(n_T-n_E),~k_3(n_T-n_E)\right]^{\top}.
\end{aligned}
\end{equation}
Substituting propensity functions in Eq.~\eqref{eq.enzyme.propensity} into Eq.~\eqref{eq.cme}, we obtain the corresponding CME as follows:
\begin{equation}
\begin{aligned}
\frac{\partial P(n_E,n_S,t)}{\partial t}&=\Omega k_0P(n_E,n_S-1,t)\\
&\hspace{-1.5cm}+\Omega^{-1}k_1(n_E+1)(n_S+1)P(n_E+1,n_S+1,t)\\
&\hspace{-1.5cm}+(n_T-n_E+1)\Big(k_2P(n_E-1,n_S-1,t)\\
&\hspace{-1.5cm}+k_3P(n_E-1,n_S,t)\Big)-\Big(\Omega k_0+\displaystyle\Omega^{-1}k_1n_En_S\\
&\hspace{-1.5cm}+(k_2+k_3)(n_T-n_E)\Big)P(n_E,n_S,t).
\end{aligned}
\end{equation}
We fix parameters $k_2=5$ and $k_3=4$ and analyze the conditions of parameters $k_0$ and $k_1$. 
Let $\mu_{E}$ and $\sigma_{E}$ be the mean and variance, respectively, of the species $E$ at the steady state ($\mu_{S}$ and $\sigma_{S}$ are defined analogously for $S$). 
First, we add two constraints: $\sigma_{E}<c$ and $\sigma_{S}<c$, where $c$ is a positive constant. 
The conditions of parameters such that the means and variances satisfy the above constraints are shown in Figs.~\ref{fig.m2enzyme}(a)--(c), where we show three $c$ cases: (a) $c=200$, (b) $c=300$, and (c) $c=400$. 
Next, we fix $c=500$ and add two additional constraints: $\sigma_{E}<d\mu_{E}$ and $\sigma_{S}<d\mu_{S}$, where $d$ is a positive constant. 
The value of $d$ can be considered as an upper bound of the Fano factor, which is equal to one in the case of the Poisson distribution. 
Here, we set several intermediate ranges for the Fano factor and accept, to some extent, a large dispersion in the steady-state distribution. 
Figures~\ref{fig.m2enzyme}(d)--(f) show results for three $d$ cases with fixed $c=500$: (d) $d=1.5$, (e) $d=2.0$, and (f) $d=2.5$. 
In Fig.~\ref{fig.m2enzyme}, the meanings of the colored regions, lines, and circles are the same as in Fig.~\ref{fig.gene}, and the other parameter values are shown in the caption of Fig.~\ref{fig.m2enzyme}. 
\begin{figure}[t]
\centering
\includegraphics[width=0.23\textwidth]{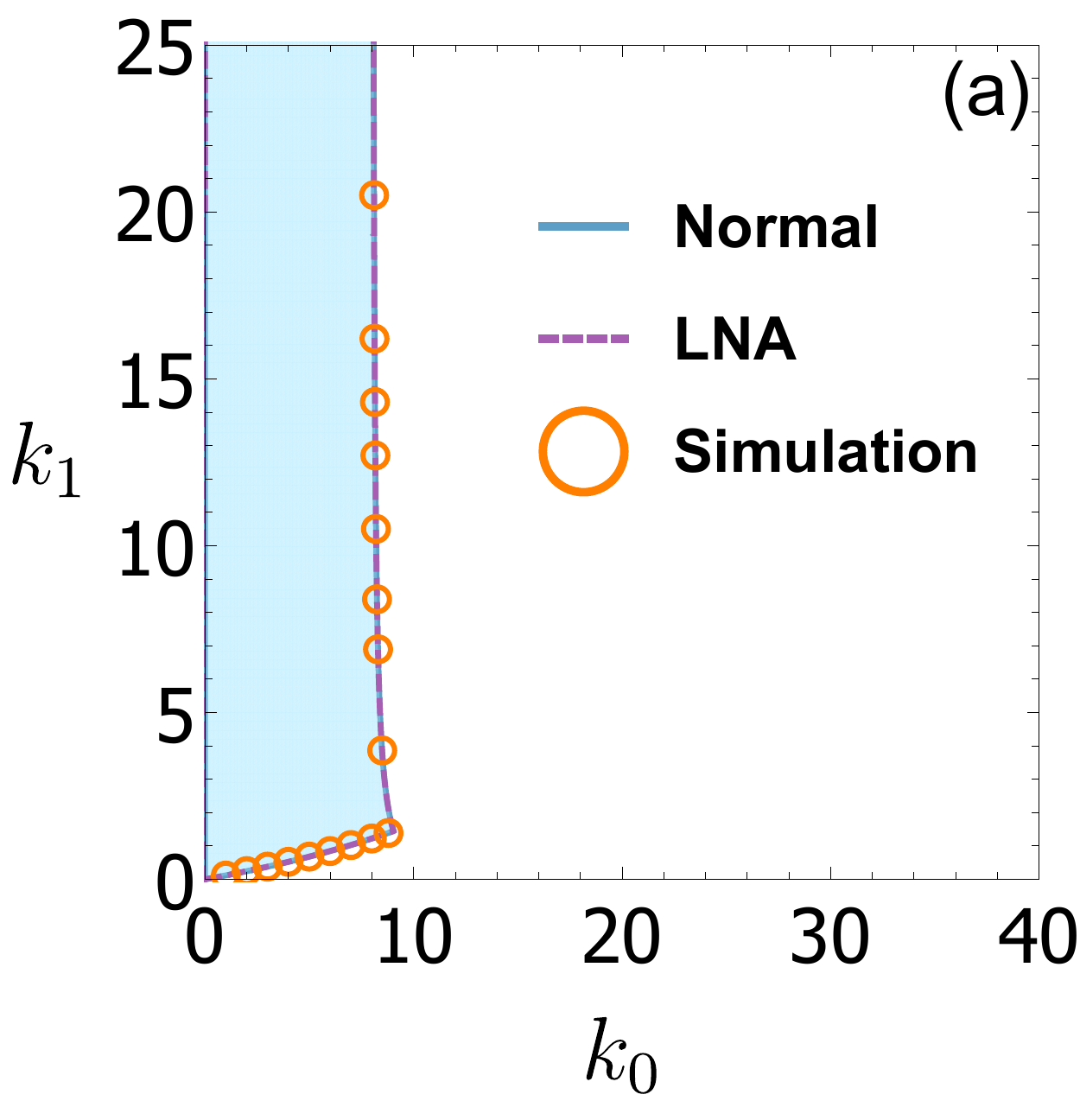}
\includegraphics[width=0.23\textwidth]{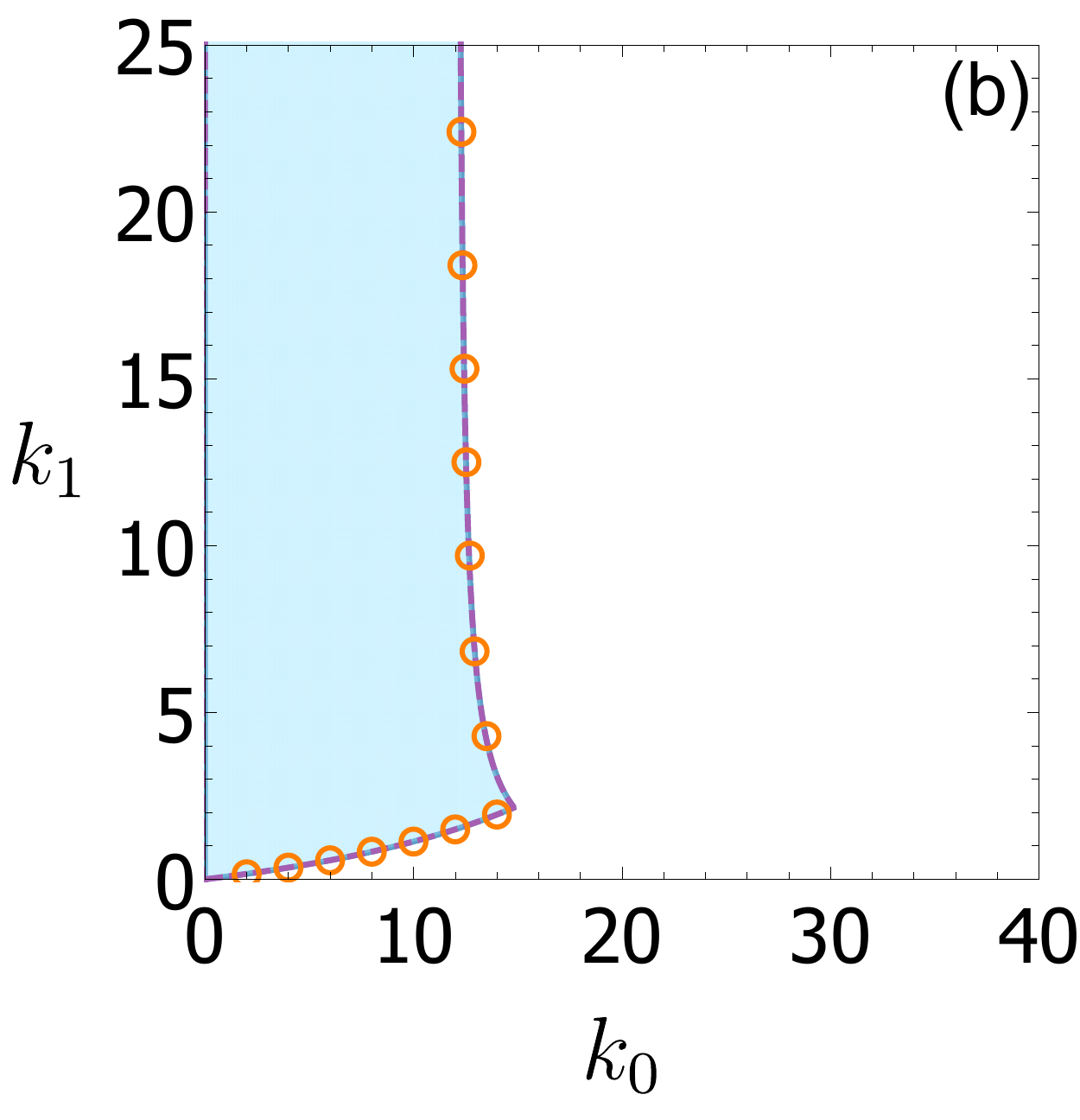}
\includegraphics[width=0.23\textwidth]{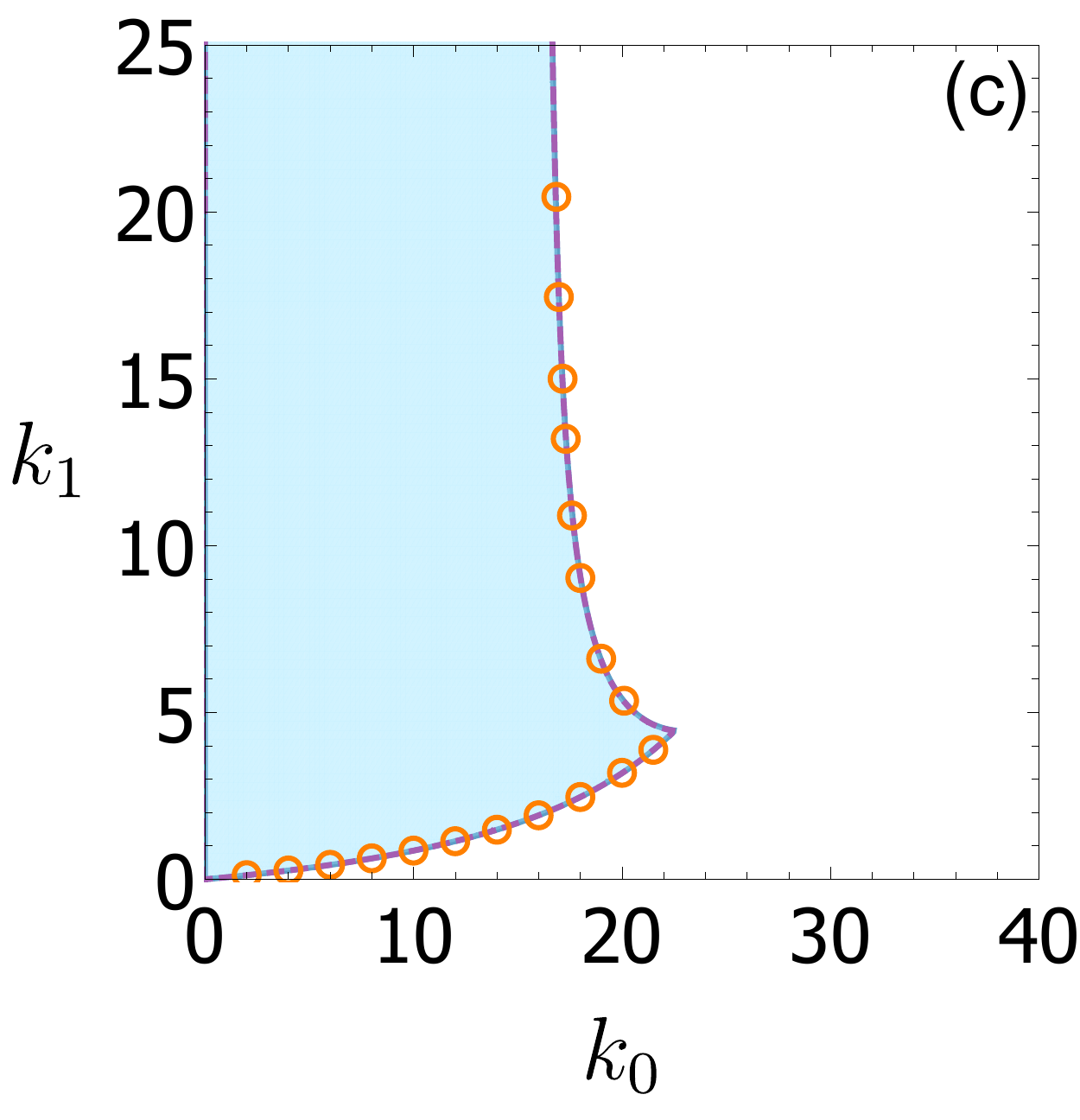}
\includegraphics[width=0.23\textwidth]{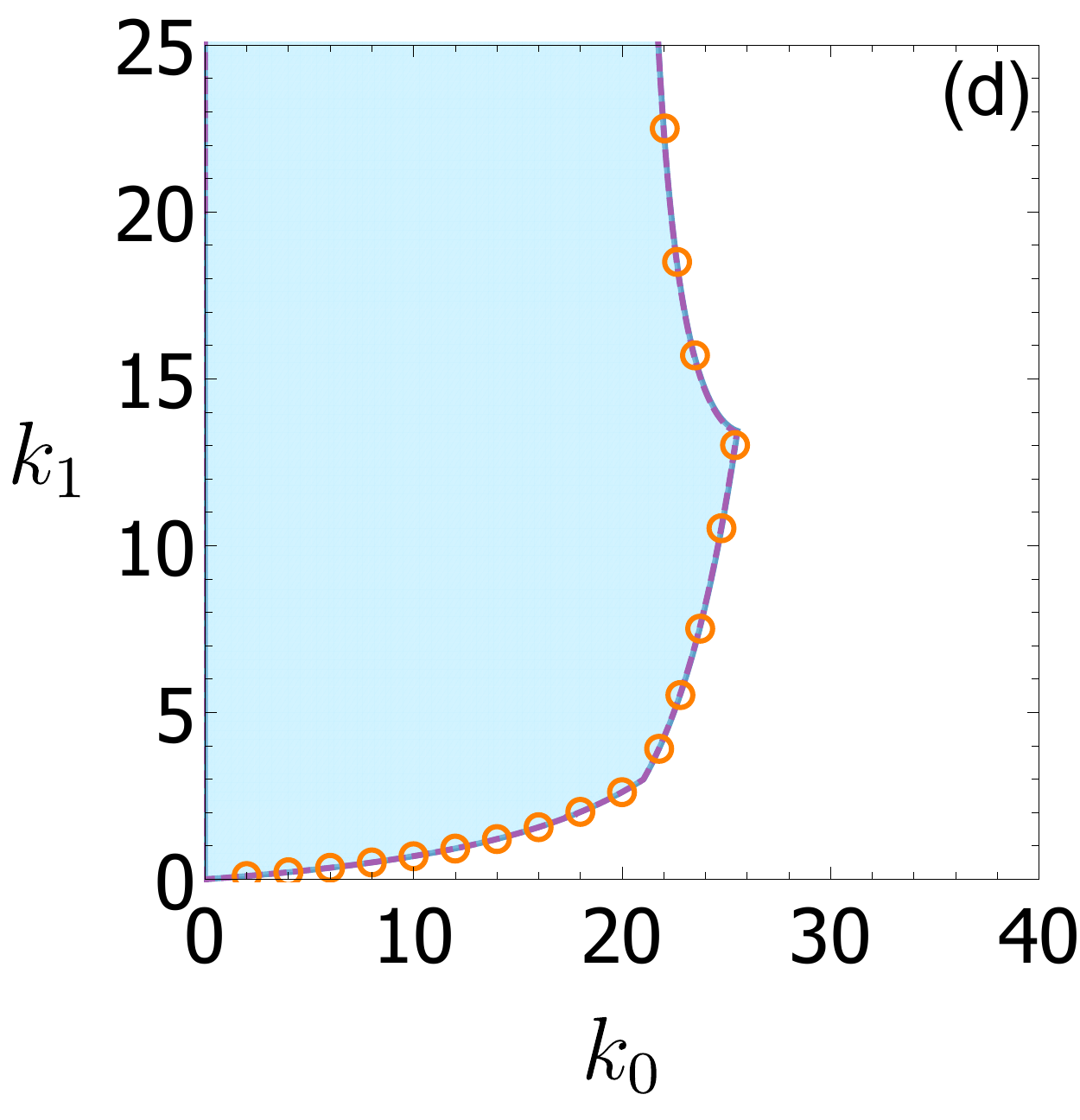}
\includegraphics[width=0.23\textwidth]{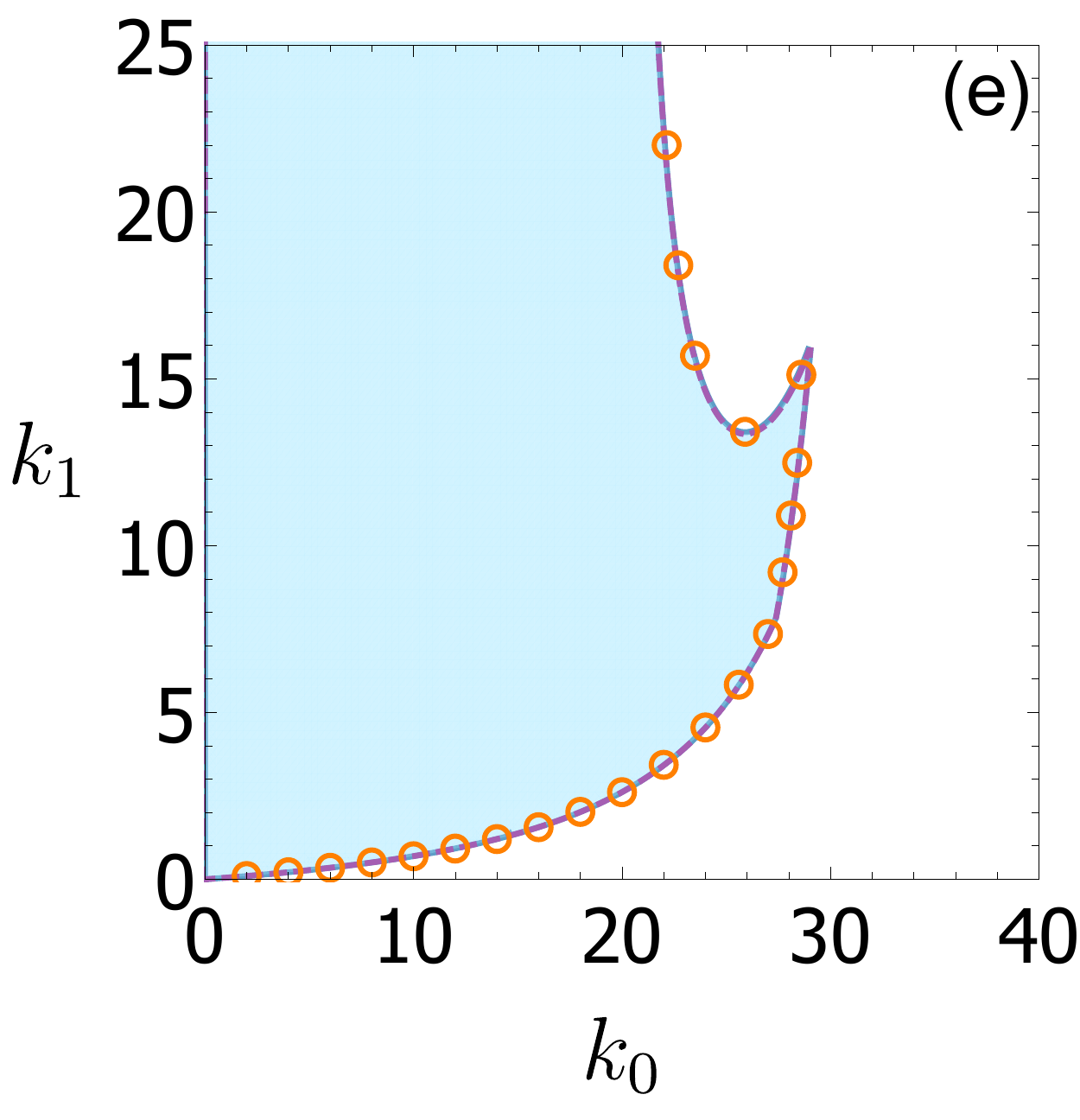}
\includegraphics[width=0.23\textwidth]{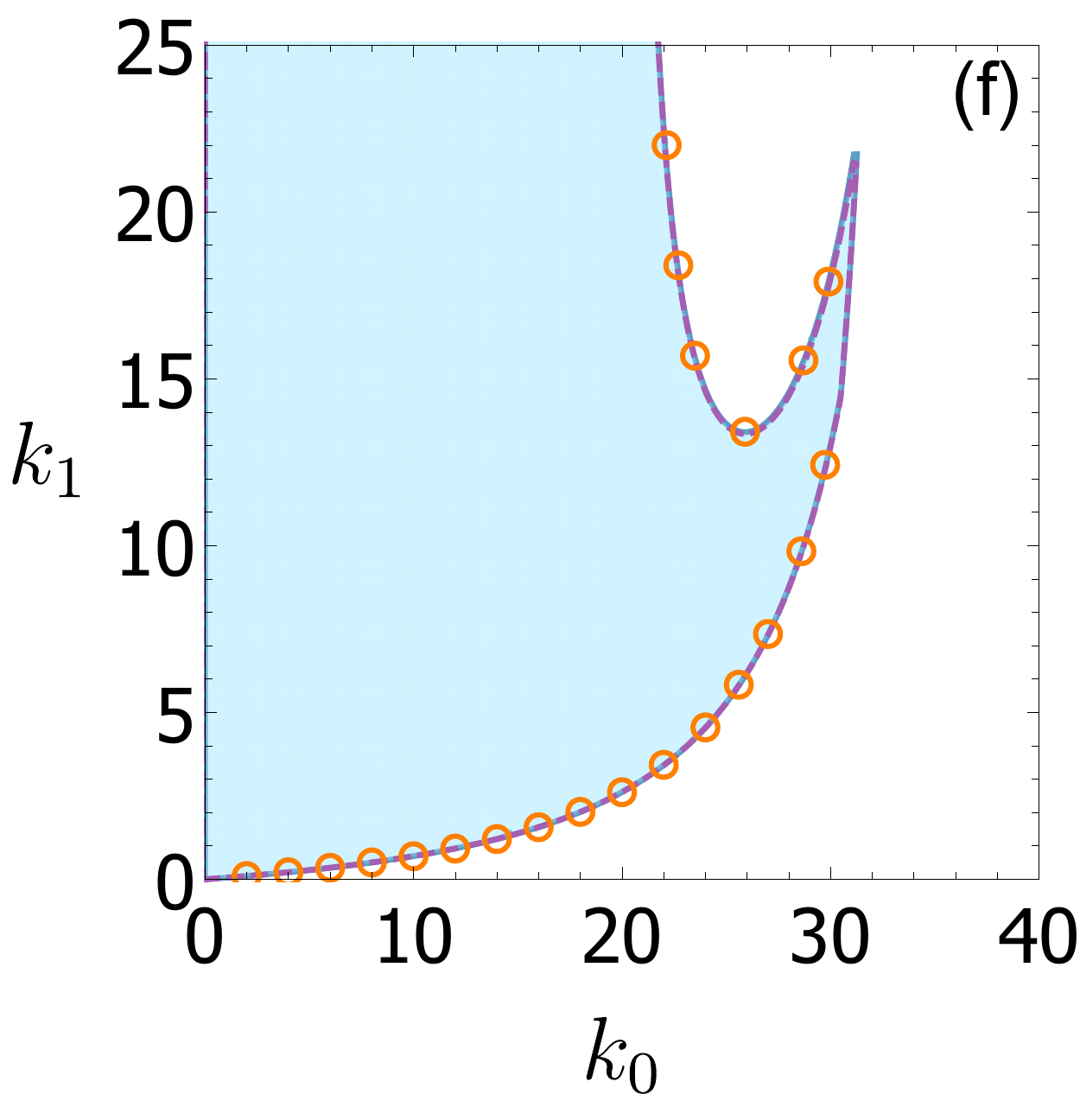}
\caption{(Color online) Parameter analysis of enzyme reactions with fixed parameters $\Omega=100, n_T=1000, k_2=5$, and $k_3=4$. Our analytical results (blue region) represent the region of parameters such that inequalities $\sigma_{E}<c,\sigma_{S}<c$ are satisfied at the steady state, where (a) $c=200$, (b) $c=300$, and (c) $c=400$. The boundaries of regions obtained with the approximations based on normal distribution and the LNA are expressed by blue lines and violet dashed lines, respectively. Orange circles, which represent simulation results, indicate the boundaries of regions of satisfied parameters. When $c$ is fixed to $500$ and constraints $\sigma_{E}<d\mu_{E}$ and $\sigma_{S}<d\mu_{S}$ are added, the corresponding results are as in (d) $d=1.5$, (e) $d=2.0$, and (f) $d=2.5$.}\label{fig.m2enzyme}
\end{figure}
When we relax the constraints, the region of parameters also enlarges. 
From the figures, for Michaelis--Menten enzyme reactions that often appear in biochemical reactions, we can conclude that our method gives results consistent with stochastic simulations. 
Blue lines and violet dashed lines are almost identical.
This implies that the approximations based on normal distribution and the LNA give the same accuracy in these cases.
Interestingly, when we relax the constraint conditions $\sigma_{E}<d\mu_{E}$ and $\sigma_{S}<d\mu_{S}$, the region of the parameters undergoes a change to non-convex as shown in Figs.~\ref{fig.m2enzyme}(e) and (f). 
The upper boundary curve indicates where the condition $\sigma_{S}<c$ becomes violated, i.e., when $\sigma_{S}$ is equal to $c$. 
This yields the discontinuity of the parameter $k_0$ if $k_1$ is fixed near $k_1=15$.

\subsection{Brusselator model}
Next, we examine a nonlinear oscillating reaction network, the Brusselator model \cite{Prigogine.1968}. 
This system is composed of two reactant species $X_1$ and $X_2$ and the following four reactions:
\begin{equation}
\varnothing\xrightarrow{k_0} X_1,~2X_1+X_2\xrightarrow{k_1} 3X_1,~X_1\xrightarrow{k_2} X_2,~X_1\xrightarrow{k_3}\varnothing,
\end{equation}
where $k_i~(0\leq i\leq 3)$ denote the reaction rates. 
We fix $k_0=k_3=1$ and consider $k_1$ and $k_2$ parameters as target of the analysis. Let $n_1(t)$ and $n_2(t)$ be the molecule numbers of reactant species $X_1$ and $X_2$, respectively, at time $t$. 
The stoichiometric matrix $\boldsymbol{V}$ is
\begin{equation*}
\setlength\arraycolsep{7pt}
\boldsymbol{V}=\begin{bmatrix}
1 & 1 & -1 & -1\\
0 & -1 & 1 & 0
\end{bmatrix}.
\end{equation*}
Assuming mass-action kinetics, the propensity functions are given by
\begin{equation}
\boldsymbol{f}(n_1,n_2)=\left[\Omega k_0,~\Omega^{-2}k_1n_1(n_1-1)n_2,~k_2n_1,~k_3n_1\right]^{\top}.
\end{equation}
The master equation of the system is as follows:
\begin{equation}
\begin{aligned}
\frac{\partial P(n_1,n_2,t)}{\partial t}&=\Omega k_0P(n_1-1,n_2,t)\\
&\hspace{-1.8cm}+\Omega^{-2}k_1(n_1-1)(n_1-2)(n_2+1)P(n_1-1,n_2+1,t)\\
&\hspace{-1.8cm}+(n_1+1)\Big(k_2P(n_1+1,n_2-1,t)+k_3P(n_1+1,n_2,t)\Big)\\
&\hspace{-1.8cm}-\Big(\Omega k_0+\Omega^{-2}k_1n_1(n_1-1)n_2+(k_2+k_3)n_1\Big)\\
&\hspace{3cm}\times P(n_1,n_2,t).
\end{aligned}
\end{equation}
In the case of the deterministic model, depending on the magnitude relation of $k_2$ and $k_1+1$, the deterministic rate equations show sustained oscillations, damped oscillations, or overdamped oscillations. 
However, stochastic and deterministic models may behave qualitatively differently for some parameters. 
For instance, a stochastic model can exhibit a sustained oscillation where its corresponding deterministic model shows an overdamped oscillation. 
Let $\mu_{X_1}$ and $\sigma_{X_1}$ be the mean and variance, respectively, of the species $X_1$ at the steady state ($\mu_{X_2}$ and $\sigma_{X_2}$ are defined analogously for $X_2$). 
We add two constraints, $\sigma_{X_1}<c$ and $\sigma_{X_2}<c$, where $c$ is a positive constant, to control the noise level or amplitude of the oscillation at the steady state. 
We calculate the conditions of the parameters such that the steady-state distribution satisfies these constraints. 
The results are shown in Fig.~\ref{fig.brusselator} for three $c$ cases: (a) $c=300$, (b) $c=400$, and (c) $c=500$. 
Again, meanings of the colored regions, lines, and circles in Fig.~\ref{fig.brusselator} are the same as in Fig.~\ref{fig.gene}, and the other parameter values are shown in the caption of Fig.~\ref{fig.brusselator}.
\begin{figure}[t]
\centering
\includegraphics[width=0.23\textwidth]{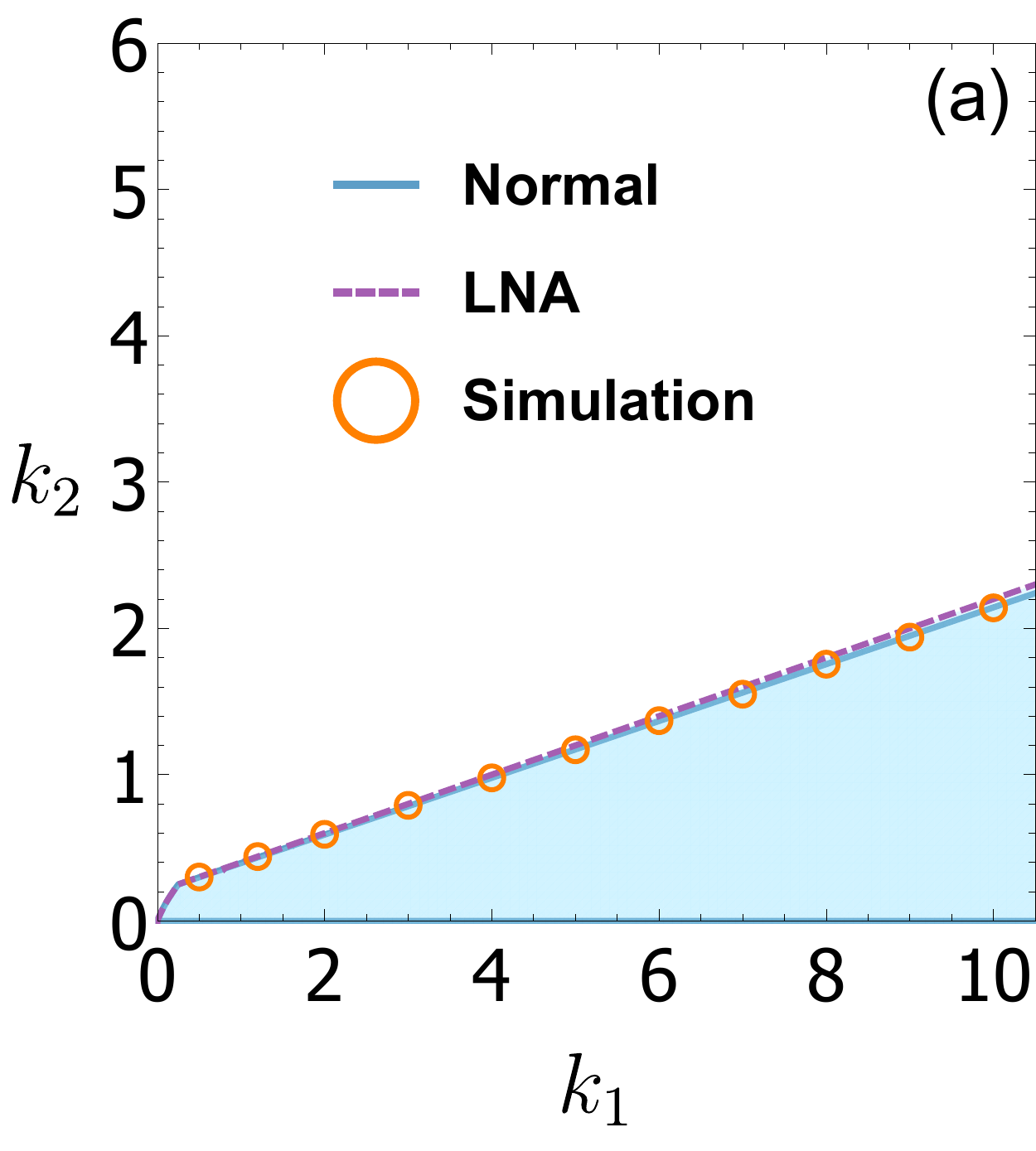}
\includegraphics[width=0.23\textwidth]{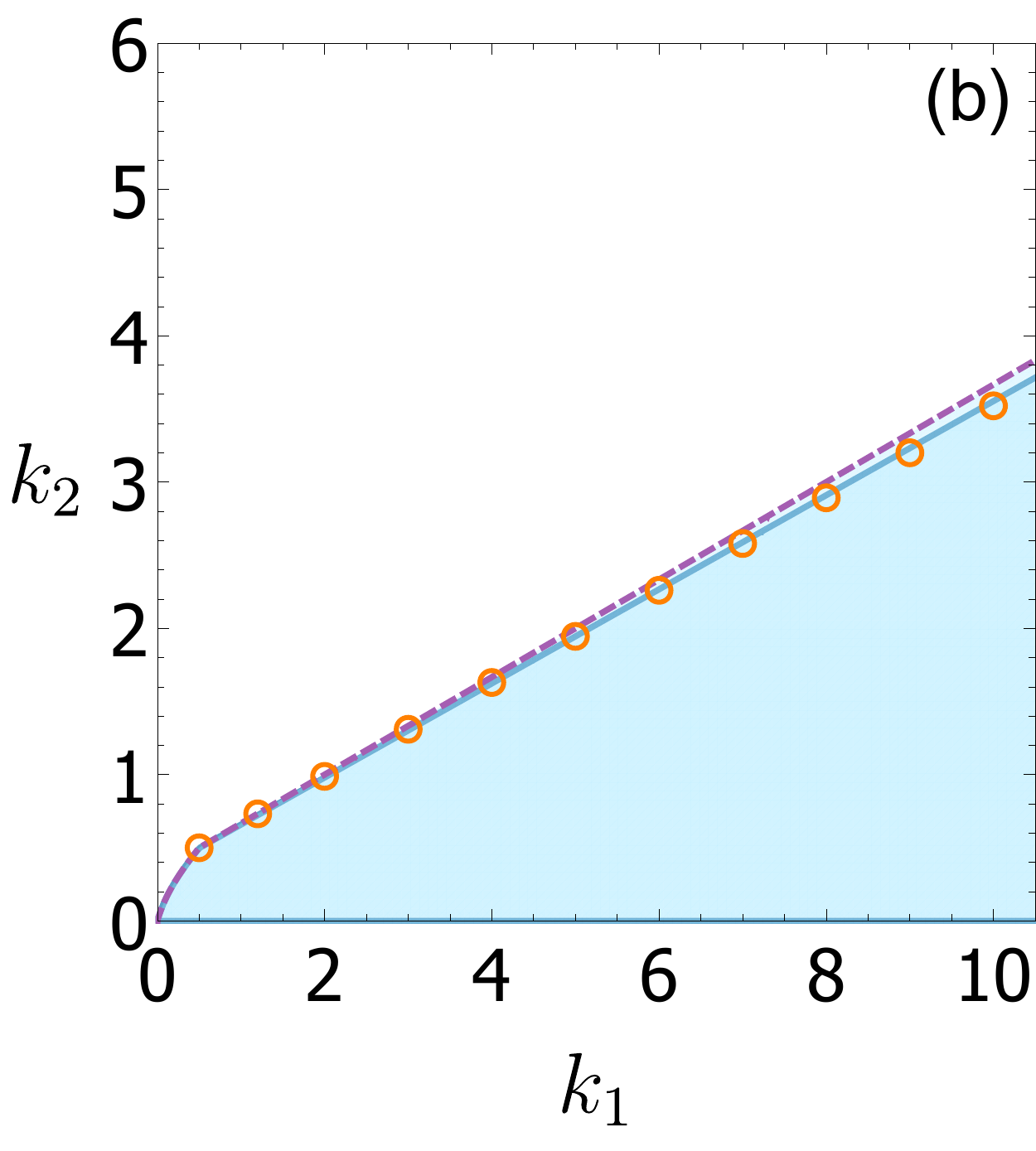}
\includegraphics[width=0.23\textwidth]{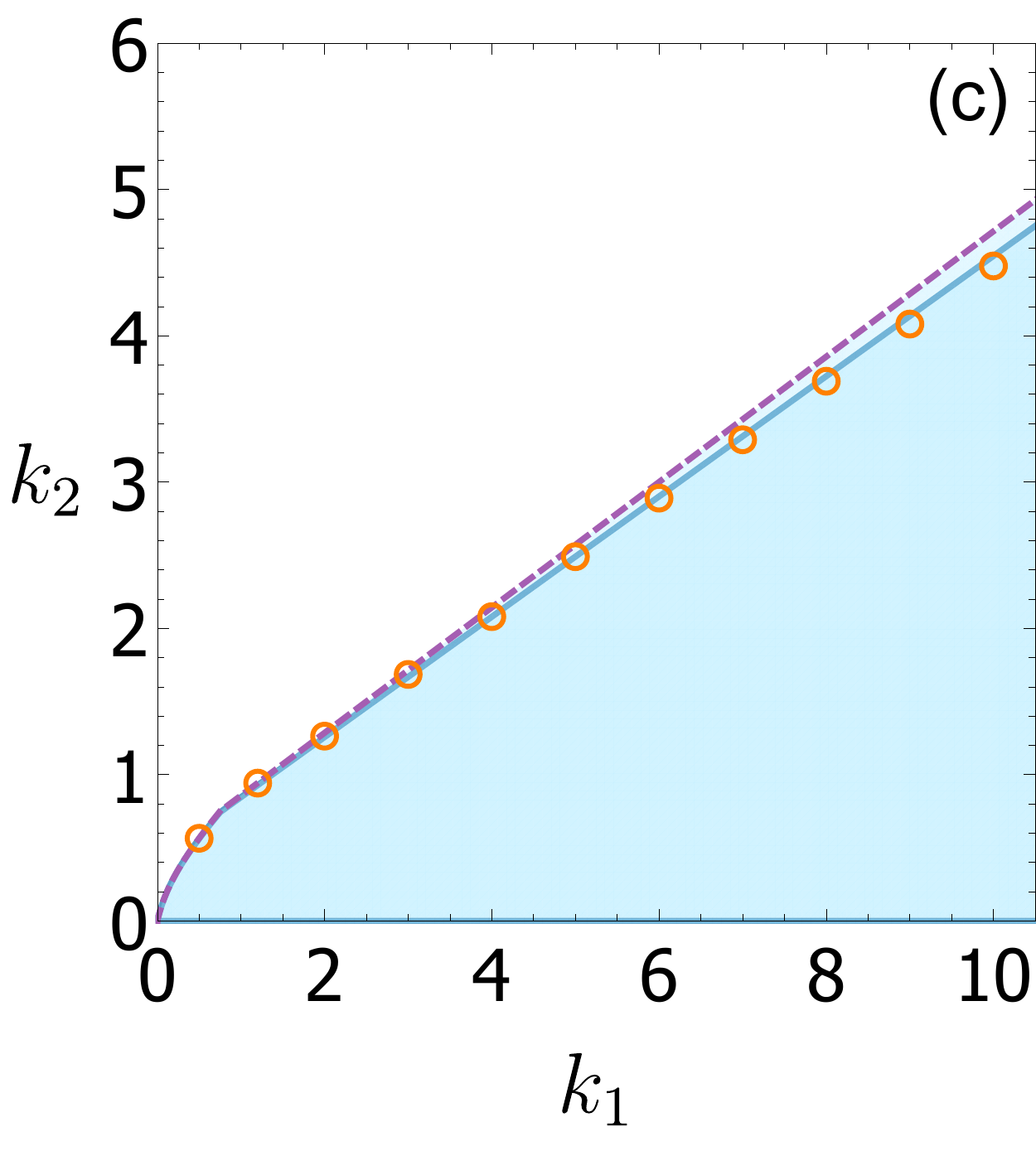}
\caption{(Color online) Parameter analysis of the Brusselator model with fixed parameters $\Omega=200$ and $k_0=k_3=1$. Blue region represents the conditions of parameters such that inequalities $\sigma_{X_1}<c,\sigma_{X_2}<c$ are satisfied at the steady state. Blue lines and violet dashed lines express the boundaries of regions obtained with the approximations based on normal distribution and the LNA, respectively. The simulation results (orange circles) indicate the boundaries of regions of satisfied parameters. Figures from left to right correspond to the cases of (a) $c=300$, (b) $c=400$, and (c) $c=500$.}\label{fig.brusselator}
\end{figure}
When $k_1$ is fixed, increasing $k_2$ results in stronger fluctuations at the steady state. 
From Fig.~\ref{fig.brusselator}, it can be seen that our method gives results comparable with those of stochastic simulations.
When increasing the value of $c$, however, the results obtained with the LNA are not as good as with the approximation based on normal distribution.
The reason is that the LNA is derived under the assumption of small fluctuations, which are of the order of $\mathcal{O}(\Omega^{1/2})$.
Increasing $c$ means that we allow more considerable fluctuations at the steady state, and leading to poor performance of the LNA.
The LNA underestimates the values of variances at the steady state; therefore, the parameter regions are enlarged.

In the case of the deterministic model, the system possesses a stable steady state when $k_2<k_1+1$. 
However, in the case of the stochastic model, the region of parameters is much smaller.
This can be explained by the fact that in the stochastic model, noise-induced oscillation occurs earlier, which means that the region of parameters is limited.

\section{Conclusions}\label{sect.conc}
Stochastic fluctuations are inevitable and ubiquitous in biological systems. 
Recent experimental studies have revealed that noise plays a crucial role in the biochemical reaction networks of living cells. 
For example, stochastic effects on gene expression lead to massive amounts of cell--cell variation observed in isogenic populations \cite{Elowitz.2002}. 
Deterministic models, i.e., rate equations, give a macroscopic description of the dynamics of reaction networks and are incapable of capturing the features of the system when the effects of stochastic fluctuations become significant. 
Thus, stochastic models, i.e., master equations, are typically exploited to describe the dynamics of reaction networks as a stochastic process.

Steady states play several important roles in many biological functions and have been intensively studied in recent years. 
However, stochastic models have not been explored. 
There is a possibility that certain steady state in the stochastic model cannot be observed with the deterministic model \cite{Patrick.2015}. 
In this paper, we proposed an algebraic method to calculate parameter regions in which steady-state distribution of the reaction network satisfies some given constraints. 
We examined our method on three small reaction networks and performed numerical simulations to verify its validity. 
Through the experiments, it can be concluded that our method gives consistent results with those of the simulations. 
Our approach does not require prior knowledge of the parameters. 
This is a significant benefit since information about the parameters is often unavailable in biological systems. 
One can also intentionally add constraints, which relate means and variances at steady state, to obtain the conditions of parameters under which the system is brought into a desired steady state.

The precision of our method relies on moment closure approximation. 
Approximations in our method (moment closure based on normal distributions and LNA) give comparable results in the cases of unimodal steady-state distributions. 
For the networks characterized by a multimodal distribution, these approximations may provide unreliable results \cite{Ramon.2012,David.2014,David.2015}.
In such cases, an approximation scheme that can handle multimodal distributions like conditional LNA \cite{Thomas.2014} should be considered.
We stress that the approximation of moment equations in our method can be flexibly replaced by other approximation schemes.

\section*{Acknowledgments}
This work was supported by MEXT KAKENHI Grant No.JP16K00325.

\begin{widetext}
\appendix

\section{Derivation of moment equations}\label{app.moment.equations}
To obtain the equation of the first moment, we multiply $n_i$ by Eq.~\eqref{eq.cme} and take the sum of all possible states $\boldsymbol{n}$ to get the following equation:
\begin{equation}\label{eq.mean1}
\sum_{\boldsymbol{n}}n_i\frac{\partial P(\boldsymbol{n},t)}{\partial t}=\sum_{j=1}^{M}\sum_{\boldsymbol{n}}(n_{i}f_{j}(\boldsymbol{n}-\boldsymbol{V}_{j})P(\boldsymbol{n}-\boldsymbol{V}_{j},t)-n_{i}f_{j}(\boldsymbol{n})P(\boldsymbol{n},t)).
\end{equation}
By applying the transformation $\boldsymbol{n}-\boldsymbol{V}_j\rightarrow\boldsymbol{n}$ in the first term of the right side of Eq.~\eqref{eq.mean1}, we obtain
\begin{align*}
\sum_{\boldsymbol{n}}n_i\frac{\partial P(\boldsymbol{n},t)}{\partial t}&=\sum_{j=1}^{M}\sum_{\boldsymbol{n}} ((n_{i}+V_{ij})f_{j}(\boldsymbol{n})P(\boldsymbol{n},t)-n_{i}f_{j}(\boldsymbol{n})P(\boldsymbol{n},t))\\
&= \sum_{j=1}^{M}\sum_{\boldsymbol{n}} V_{ij}f_{j}(\boldsymbol{n})P(\boldsymbol{n},t).
\end{align*}
Thus,
\begin{equation}\label{eq.mean2}
\frac{d\mu_{i}}{dt}=\sum_{j=1}^{M}V_{ij}\langle f_j(\boldsymbol{n})\rangle.
\end{equation}
Up till now, $\boldsymbol{n}$ is considered to be a vector of positive integers. To express the time derivative of the first moment by only itself and the second central moment, we assume that $\boldsymbol{n}$ is a vector of continuous real numbers and apply the Taylor expansion for $f_j(\boldsymbol{n})$ around $\boldsymbol{\mu}$ as follows:
\begin{equation}\label{eq.taylor}
f_j(\boldsymbol{n})=f_j(\boldsymbol{\mu})+(\boldsymbol{n}-\boldsymbol{\mu})^{\top}\frac{\partial f_j(\boldsymbol{\mu})}{\partial \boldsymbol{n}}+\frac{1}{2}(\boldsymbol{n}-\boldsymbol{\mu})^{\top}\frac{\partial^2 f_j(\boldsymbol{\mu})}{\partial \boldsymbol{n}^2}(\boldsymbol{n}-\boldsymbol{\mu}) + O(|\boldsymbol{n}-\boldsymbol{\mu}|^3).
\end{equation}
By approximating $f_j(\boldsymbol{n})$ to the order of $|\boldsymbol{n}-\boldsymbol{\mu}|^2$ and utilizing the fact that $\mathbb{E}[\boldsymbol{n}-\boldsymbol{\mu}]=0$, we obtain the following approximation:
\begin{align*}
\langle f_j(\boldsymbol{n})\rangle&=f_j(\boldsymbol{\mu})+\frac{1}{2}\sum_{h,l}\frac{\partial^2 f_j(\boldsymbol{\mu})}{\partial n_h\partial n_l}\langle (n_h-\mu_h)(n_l-\mu_l)\rangle\\
&=f_j(\boldsymbol{\mu})+\frac{1}{2}\sum_{h,l}\frac{\partial^2 f_j(\boldsymbol{\mu})}{\partial n_h\partial n_l}\sigma_{hl}.
\end{align*}
Substituting the above result into Eq.~\eqref{eq.mean2}, we obtain a differential equation of the first moment as follows:
\begin{equation}
\frac{d\mu_{i}}{dt}=\sum_{j=1}^{M}V_{ij}\left(f_j(\boldsymbol{\mu})+\frac{1}{2}\sum_{h,l}\frac{\partial^2 f_j(\boldsymbol{\mu})}{\partial n_h\partial n_l}\sigma_{hl}\right).
\end{equation}
Similarly, to obtain the equation of the second central moment, we multiply $(n_i-\mu_i)(n_{i'}-\mu_{i'})$  by Eq.~\eqref{eq.cme} and take the sum of all possible states $\boldsymbol{n}$ to get the following equation:
\begin{equation}\label{eq.var1}
\frac{d\sigma_{ii'}}{dt}=\sum_{j=1}^{M}\sum_{\boldsymbol{n}} ((n_i-\mu_i)(n_{i'}-\mu_{i'})f_{j}(\boldsymbol{n}-\boldsymbol{V}_{j})P(\boldsymbol{n} -\boldsymbol{V}_{j},t)-(n_i-\mu_i)(n_{i'}-\mu_{i'})f_{j}(\boldsymbol{n})P(\boldsymbol{n},t)).
\end{equation}
By applying the transformation $\boldsymbol{n}-\boldsymbol{V}_j\rightarrow\boldsymbol{n}$ in the first term of the right side of Eq.~\eqref{eq.var1}, we obtain
\begin{align*}
\frac{d\sigma_{ii'}}{dt}&=\sum_{j=1}^{M}\sum_{\boldsymbol{n}} ((n_i+V_{ij}-\mu_i)(n_{i'}+V_{i'j}-\mu_{i'})f_{j}(\boldsymbol{n})P(\boldsymbol{n},t) - (n_i-\mu_i)(n_{i'}-\mu_{i'})f_{j}(\boldsymbol{n})P(\boldsymbol{n},t))\\
&=\sum_{j=1}^{M}\sum_{\boldsymbol{n}}(V_{ij}V_{i'j}+V_{i'j}(n_i-\mu_i)+V_{ij}(n_{i'}-\mu_{i'}))f_{j}(\boldsymbol{n})P(\boldsymbol{n},t))\\
&=\sum_{j=1}^{M}(V_{ij}V_{i'j}\langle f_j(\boldsymbol{n})\rangle+V_{i'j}\langle (n_i-\mu_i)f_{j}(\boldsymbol{n})\rangle+V_{ij}\langle (n_{i'}-\mu_{i'})f_{j}(\boldsymbol{n})\rangle).
\end{align*}
Using the approximation of $f_j(\boldsymbol{n})$ in Eq.~\eqref{eq.taylor} and truncating all central moments of order higher than two, we get the following approximation:
\begin{equation*}
\langle(n_i-\mu_i)f_{j}(\boldsymbol{n})\rangle=\sum_{l}\frac{\partial f_j(\boldsymbol{\mu})}{\partial n_l}\langle(n_i-\mu_i)(n_l-\mu_l)\rangle=\sum_{l}\frac{\partial f_j(\boldsymbol{\mu})}{\partial n_l}\sigma_{il}.
\end{equation*}
Consequently, the equation of the second central moment is acquired as follows:
\begin{align*}
\frac{d\sigma_{ii'}}{dt}=\sum_{j=1}^{M}\left(V_{ij}V_{i'j}\left(f_j(\boldsymbol{\mu})+\frac{1}{2}\sum_{h,l}\frac{\partial^2 f_j(\boldsymbol{\mu})}{\partial n_h\partial n_l}\sigma_{hl}\right)+V_{i'j}\sum_{l}\frac{\partial f_j(\boldsymbol{\mu})}{\partial n_l}\sigma_{il}+V_{ij}\sum_{l}\frac{\partial f_j(\boldsymbol{\mu})}{\partial n_l}\sigma_{i'l}\right).
\end{align*}

\section{Exact Solve algorithm}
{
\begin{algorithm}[H]
\begin{algorithmic}[1]
\Require A regular system $\{\mathcal{P},\mathcal{Q}\}\subset\mathbb{R}[x_1,\dots,x_n]$
\Ensure The number of distinct real solutions of system $\{\mathcal{P}=0,\mathcal{Q}>0\}$
\State $rs_1\gets \mathrm{FindRoots}(P_1(x_1))$
\If {$rs_1=[r_{11},\dots,r_{h_{1}1}]$} $pts_1\gets [[r_{11}],\dots,[r_{h_{1}1}]]$ \textbf{else} $pts_1\gets [~]$
\EndIf
\State set $i\gets 1$
\Repeat
\If {$pts_i=[~]$} $pts\gets [~]$ and go to step \ref{step.end.exact.solve}
\Else
\State let $pts_i=[[r_{11},\dots,r_{1i}],\dots,[r_{h_{i}1},\dots,r_{h_{i}i}]]$
\State set $l\gets 1$
\State substitute $x_1=r_{l1},\dots,x_i=r_{li}$ into the system $\{\mathcal{P}=0,\mathcal{Q}>0\}$\label{step.init}
\State $rs_{i+1}^{(l)}\gets \mathrm{FindRoots}(P_{i+1}(x_{i+1}))$
\If {$rs_{i+1}^{(l)}=[~]$} $pts_{i+1}^{(l)}=[~]$
\Else
\State let $rs_{i+1}^{(l)}=[r_1,\dots,r_{u}]$
\State $pts_{i+1}^{(l)}\gets [[r_{l1},\dots,r_{li},r_1],\dots,[r_{l1},\dots,r_{li},r_{u}]];~ l\gets l+1$
\EndIf
\If {$l\leq h_i$} go to step \ref{step.init} \textbf{else} $pts_{i+1}\gets \bigcup_{i'=1}^{h_i}pts_{i+1}^{(i')};~ i\gets i+1$
\EndIf
\EndIf
\Until {$i=n$}
\If {$pts=[~]$}\label{step.end.exact.solve} \Return $0$
\Else ~\Return $m$ if $m$ members of $pts$ make $\mathcal{Q}>0$ true
\EndIf
\end{algorithmic}
\caption{ExactSolve}\label{alg.exactsolve}
\end{algorithm}
}

\section{Probabilistic Test algorithm}
If a polynomial system always has at least one solution in $\mathbb{C}^{n}$, then the resultant of the system will be zero. Exploiting this property, one can calculate the resultant and check whether the system has solutions. However, as computational complexity of the resultant grows fast when the number of variables increases, it leads to a heavy computation and poor time performance. Suppose that the resultant is $R(\boldsymbol{u})$, where $\boldsymbol{u}$ is the parameter. Then instead of calculating exact form of $R(\boldsymbol{u})$, we compute $R(\bar{\boldsymbol{u}})~(\mathrm{mod}~p)$, where $\bar{\boldsymbol{u}}$ is a rational value of the parameter and $p$ is an arbitrary prime number. From a practical viewpoint, calculating $R(\bar{\boldsymbol{u}})~(\mathrm{mod}~p)$ is more efficient than symbolic computation of $R(\boldsymbol{u})$. If $R(\boldsymbol{u})=0$ then we always obtain the result of zero, since $R(\bar{\boldsymbol{u}})~(\mathrm{mod}~p)=0$. A large value of $p$ may lead to the high probability of the elimination; however, it also reduces the time performance.
{
\begin{algorithm}[H]
\begin{algorithmic}[1]
\Require A polynomial system $\mathcal{P}\subset\mathbb{K}[x_1,\dots,x_n]$ and a prime number $p$
\Ensure Return true if $\{\mathcal{P}=0\}$ probably has at least one solution in $\mathbb{C}^n$, else return false
\State substitute a random rational value of the parameter $\boldsymbol{u}=\bar{\boldsymbol{u}}$ into $\mathcal{P}$
\If {$\mathcal{P}=\{P_1,P_2,\dots,P_m\},~m>n$}
\State set $i_n\gets n,i_m\gets m$
\While {$i_n>0$}
\State $ps\gets \{P_1,P_2,\dots,P_{i_m}\}$
\If {$ps$ contains nonzero constant} \Return false \EndIf
\State sort $ps$ according to ascending degree of $x_{i_n}$
\If {$\mathrm{deg}(ps[i_m],x_{i_n})>0$}
\State $h\gets$ minimum index such that $\mathrm{deg}(ps[h],x_{i_n})>0$
\State update
\begin{align*}
\hspace{2cm}P_{i}&=ps[i],~\forall 1\leq i<h,~P_{i_m}=ps[i_m],\\
\hspace{2cm}P_{i}&=\mathrm{res}(ps[i_m],ps[i],x_{i_n})~(\mathrm{mod}~p),~\forall h\leq i<i_m
\end{align*}
\EndIf
\State set $i_n\gets i_n-1,~i_m\gets i_m-1$
\EndWhile
\If {$\mathcal{P}$ contains nonzero constant} \Return false \EndIf
\EndIf
\State \Return true
\end{algorithmic}
\caption{ProbabilisticTest}\label{alg.probtest}
\end{algorithm}
}

\section{Sequential Decomposition algorithm}
This sequential decomposition algorithm is based on \textbf{RegSer} in \citep{Wang.2000}. Unlike \textbf{RegSer} which produces a list of regular systems, this algorithm returns a list of decomposed systems and a regular system which may be empty. Several processes are added to eliminate the systems that have no solutions. The prime number $p$ is set $p=3$ in \textbf{ProbabilisticTest}. The pseudocode of the algorithm is shown as follows.
{
\begin{algorithm}[H]
\caption{SequentialDecomposition}\label{alg.seqdecomp}
\begin{algorithmic}[1]
\Require A polynomial system $\{\mathcal{T},\mathcal{U}\}\subset\mathbb{K}[\boldsymbol{x]}$ and a positive integer number $n$
\Ensure Return $[\Phi,\Psi]$, where $\Phi$ is a list of decomposing systems and $\Psi$ is a regular system
\State set $\Phi\gets\varnothing,\Psi\gets\varnothing$
\For {$m=n,\dots,1$}
\State set $\mathcal{T}\gets\mathcal{T}\setminus\{0\},~ \mathcal{U}\gets\mathcal{U}\setminus(\mathbb{K}\setminus\{0\})$
\If {$\mathcal{T}\cap\mathbb{K}\neq\varnothing$ or $0\in\mathcal{U}$} go to \ref{step.done} \EndIf
\If {$\exists~ u[\boldsymbol{x}]\in\mathcal{U},~ t[\boldsymbol{x}] \in\mathcal{T}$ such that $\mathrm{prem}(u,t)=0$} go to \ref{step.done} \EndIf
\If {ProbabilisticTest($\mathcal{T},3$) is false} go to step \ref{step.done}
\EndIf
\If {$\mathcal{T}^{\langle m\rangle}=\varnothing$} go to step \ref{step.UkIsNotEmpty}\EndIf
\State make all polynomials in $\mathcal{T}$ and $\mathcal{U}$ to be square-free
\While {true}
\State let $P_2$ be an element of $\mathcal{T}^{\langle m\rangle}$ with minimal degree in $x_m$ and set
\begin{align*}
\hspace{1.5cm}\Phi &\gets\Phi\cup[\{\mathcal{T}~\setminus~\{P_2\}\cup \{\mathrm{ini}(P_2),\mathrm{red}(P_2)\},\mathcal{U},m\}]\\
\hspace{1.5cm}\mathcal{U} &\gets \mathcal{U}\cup\{\mathrm{ini}(P_2)\}
\end{align*}
\If {$|\mathcal{T}^{\langle m\rangle}|=1$} go to step \ref{step.UkAndP} \textbf{else} take a polynomial $P_1$ from $\mathcal{T}^{\langle m\rangle}\setminus\{P_2\}$
\EndIf
\State compute the s.r.s $H_2,\dots,H_r$ of $P_1$ and $P_2$ w.r.t $x_m$
\State set $I_i\gets\mathrm{lc}(H_i,x_m)$ for $2\leq i\leq r$
\If {$\mathrm{lv}(H_r)\prec x_m$} set $\overline{r}\gets r-1$ \textbf{else} set $\overline{r} \gets r$
\EndIf
\State set
\begin{align*}
\hspace{1.5cm}\Phi &\gets\Phi\cup[\{\mathcal{T}\setminus\{P_1,P_2\}\cup\{H_i,I_{i+1},\dots,I_r\},\mathcal{U}\cup\{I_i\},m\}~|~2\leq i\leq \overline{r}-1]\\
\hspace{1.5cm}\mathcal{T} &\gets \mathcal{T}\setminus\{P_1,P_2\}\cup\{H_r,H_{\overline{r}}\},\quad\mathcal{U}\gets\mathcal{U}\cup\{I_{\overline{r}}\}
\end{align*}
\EndWhile
\While {$\mathcal{U}^{\langle m\rangle}\neq\varnothing$ and $\mathrm{lv}(P_2)=x_m$}\label{step.UkAndP}
\State let $P_1$ be a polynomial in $\mathcal{U}^{\langle m \rangle}$, compute s.r.s $H_2,\dots,H_r$ of $P_1$ and $P_2$ w.r.t $x_m$
\State set $I_i\gets\mathrm{lc}(H_i,x_m)$ for $2\leq i\leq r$
\State set
{\begin{align*}
\hspace{1.5cm}\Phi &\gets \Phi \cup [\{\mathcal{T}\setminus\{P_2\}\cup\{\mathrm{pquo}(P_2,H_i,x_m),I_{i+1},\dots,I_r\},\mathcal{U}\cup\{I_i\},m\}~|~2\leq i\leq r-1]\\
\hspace{1.5cm}\mathcal{T} &\gets \mathcal{T}\setminus\{P_2\}\cup\{\mathrm{pquo}(P_2,H_r,x_m)\},\quad P_2\gets \mathrm{pquo}(P_2,H_i,x_m)
\end{align*}}
\If {$\mathrm{lv}(H_r)\prec x_m$} set $\mathcal{U}\gets\mathcal{U}\setminus\{P_1\}\cup\{I_r\}$ \textbf{else} set $\mathcal{U}\gets\mathcal{U}\cup\{I_r\}$
\EndIf
\EndWhile
\If {$\mathcal{U}^{\langle m \rangle}\neq\varnothing$}\label{step.UkIsNotEmpty}
\ForAll {$P_1\in\mathcal{U}^{\langle m\rangle}$}
\State set
\begin{align*}
\hspace{2cm}\Phi &\gets \Phi \cup [\{\mathcal{T}\cup\{\mathrm{ini}(P_1)\},\mathcal{U}\setminus\{P_1\}\cup\{\mathrm{red}(P_1)\},m\}]\\
\hspace{2cm}\mathcal{U}&\gets\mathcal{U}\cup\{\mathrm{ini}(P_1)\}
\end{align*}
\EndFor
\EndIf
\EndFor
\State set $\Psi\gets \{\mathcal{T},\mathcal{U}\}$
\State \Return $[\Phi,\Psi]$\label{step.done}
\end{algorithmic}
\end{algorithm}
}
\end{widetext}


\begin{thebibliography}{60}%
	\makeatletter
	\providecommand \@ifxundefined [1]{%
		\@ifx{#1\undefined}
	}%
	\providecommand \@ifnum [1]{%
		\ifnum #1\expandafter \@firstoftwo
		\else \expandafter \@secondoftwo
		\fi
	}%
	\providecommand \@ifx [1]{%
		\ifx #1\expandafter \@firstoftwo
		\else \expandafter \@secondoftwo
		\fi
	}%
	\providecommand \natexlab [1]{#1}%
	\providecommand \enquote  [1]{``#1''}%
	\providecommand \bibnamefont  [1]{#1}%
	\providecommand \bibfnamefont [1]{#1}%
	\providecommand \citenamefont [1]{#1}%
	\providecommand \href@noop [0]{\@secondoftwo}%
	\providecommand \href [0]{\begingroup \@sanitize@url \@href}%
	\providecommand \@href[1]{\@@startlink{#1}\@@href}%
	\providecommand \@@href[1]{\endgroup#1\@@endlink}%
	\providecommand \@sanitize@url [0]{\catcode `\\12\catcode `\$12\catcode
		`\&12\catcode `\#12\catcode `\^12\catcode `\_12\catcode `\%12\relax}%
	\providecommand \@@startlink[1]{}%
	\providecommand \@@endlink[0]{}%
	\providecommand \url  [0]{\begingroup\@sanitize@url \@url }%
	\providecommand \@url [1]{\endgroup\@href {#1}{\urlprefix }}%
	\providecommand \urlprefix  [0]{URL }%
	\providecommand \Eprint [0]{\href }%
	\providecommand \doibase [0]{http://dx.doi.org/}%
	\providecommand \selectlanguage [0]{\@gobble}%
	\providecommand \bibinfo  [0]{\@secondoftwo}%
	\providecommand \bibfield  [0]{\@secondoftwo}%
	\providecommand \translation [1]{[#1]}%
	\providecommand \BibitemOpen [0]{}%
	\providecommand \bibitemStop [0]{}%
	\providecommand \bibitemNoStop [0]{.\EOS\space}%
	\providecommand \EOS [0]{\spacefactor3000\relax}%
	\providecommand \BibitemShut  [1]{\csname bibitem#1\endcsname}%
	\let\auto@bib@innerbib\@empty
	\bibitem [{\citenamefont {Thattai}\ and\ \citenamefont {van
			Oudenaarden}(2001)}]{Thattai.2001}%
	\BibitemOpen
	\bibfield  {author} {\bibinfo {author} {\bibfnamefont {M.}~\bibnamefont
			{Thattai}}\ and\ \bibinfo {author} {\bibfnamefont {A.}~\bibnamefont {van
				Oudenaarden}},\ }\bibfield  {title} {\enquote {\bibinfo {title} {Intrinsic
				noise in gene regulatory networks},}\ }\href {\doibase
		10.1073/pnas.151588598} {\bibfield  {journal} {\bibinfo  {journal} {Proc.
				Natl. Acad. Sci. U.S.A}\ }\textbf {\bibinfo {volume} {98}},\ \bibinfo {pages}
		{8614--8619} (\bibinfo {year} {2001})}\BibitemShut {NoStop}%
	\bibitem [{\citenamefont {Shibata}\ and\ \citenamefont
		{Fujimoto}(2005)}]{Shibata.2005}%
	\BibitemOpen
	\bibfield  {author} {\bibinfo {author} {\bibfnamefont {T.}~\bibnamefont
			{Shibata}}\ and\ \bibinfo {author} {\bibfnamefont {K.}~\bibnamefont
			{Fujimoto}},\ }\bibfield  {title} {\enquote {\bibinfo {title} {Noisy signal
				amplification in ultrasensitive signal transduction},}\ }\href {\doibase
		10.1073/pnas.0403350102} {\bibfield  {journal} {\bibinfo  {journal} {Proc.
				Natl. Acad. Sci. U.S.A}\ }\textbf {\bibinfo {volume} {102}},\ \bibinfo
		{pages} {331--336} (\bibinfo {year} {2005})}\BibitemShut {NoStop}%
	\bibitem [{\citenamefont {Fichera}, \citenamefont {Sneider},\ and\
		\citenamefont {Wyman}(1977)}]{Fichera.1977}%
	\BibitemOpen
	\bibfield  {author} {\bibinfo {author} {\bibfnamefont {G.}~\bibnamefont
			{Fichera}}, \bibinfo {author} {\bibfnamefont {M.~A.}\ \bibnamefont
			{Sneider}}, \ and\ \bibinfo {author} {\bibfnamefont {J.}~\bibnamefont
			{Wyman}},\ }\bibfield  {title} {\enquote {\bibinfo {title} {On the existence
				of a steady state in a biological system},}\ }\href {\doibase
		10.1073/pnas.74.10.4182} {\bibfield  {journal} {\bibinfo  {journal} {Proc.
				Natl. Acad. Sci. U.S.A}\ }\textbf {\bibinfo {volume} {74}},\ \bibinfo {pages}
		{4182--4184} (\bibinfo {year} {1977})}\BibitemShut {NoStop}%
	\bibitem [{\citenamefont {Angeli}, \citenamefont {Ferrell},\ and\ \citenamefont
		{Sontag}(2004)}]{Angeli.2004}%
	\BibitemOpen
	\bibfield  {author} {\bibinfo {author} {\bibfnamefont {D.}~\bibnamefont
			{Angeli}}, \bibinfo {author} {\bibfnamefont {J.~E.}\ \bibnamefont {Ferrell}},
		\ and\ \bibinfo {author} {\bibfnamefont {E.~D.}\ \bibnamefont {Sontag}},\
	}\bibfield  {title} {\enquote {\bibinfo {title} {Detection of multistability,
				bifurcations, and hysteresis in a large class of biological positive-feedback
				systems},}\ }\href {\doibase 10.1073/pnas.0308265100} {\bibfield  {journal}
		{\bibinfo  {journal} {Proc. Natl. Acad. Sci. U.S.A}\ }\textbf {\bibinfo
			{volume} {101}},\ \bibinfo {pages} {1822--1827} (\bibinfo {year}
		{2004})}\BibitemShut {NoStop}%
	\bibitem [{\citenamefont {Gatermann}, \citenamefont {Eiswirth},\ and\
		\citenamefont {Sensse}(2005)}]{Karin.2005}%
	\BibitemOpen
	\bibfield  {author} {\bibinfo {author} {\bibfnamefont {K.}~\bibnamefont
			{Gatermann}}, \bibinfo {author} {\bibfnamefont {M.}~\bibnamefont {Eiswirth}},
		\ and\ \bibinfo {author} {\bibfnamefont {A.}~\bibnamefont {Sensse}},\
	}\bibfield  {title} {\enquote {\bibinfo {title} {Toric ideals and graph
				theory to analyze hopf bifurcations in mass action systems},}\ }\href
	{\doibase 10.1016/j.jsc.2005.07.002} {\bibfield  {journal} {\bibinfo
			{journal} {J. Symb. Comput.}\ }\textbf {\bibinfo {volume} {40}},\ \bibinfo
		{pages} {1361 -- 1382} (\bibinfo {year} {2005})}\BibitemShut {NoStop}%
	\bibitem [{\citenamefont {Craciun}\ \emph {et~al.}(2009)\citenamefont
		{Craciun}, \citenamefont {Dickenstein}, \citenamefont {Shiu},\ and\
		\citenamefont {Sturmfels}}]{Gheorghe.2009}%
	\BibitemOpen
	\bibfield  {author} {\bibinfo {author} {\bibfnamefont {G.}~\bibnamefont
			{Craciun}}, \bibinfo {author} {\bibfnamefont {A.}~\bibnamefont
			{Dickenstein}}, \bibinfo {author} {\bibfnamefont {A.}~\bibnamefont {Shiu}}, \
		and\ \bibinfo {author} {\bibfnamefont {B.}~\bibnamefont {Sturmfels}},\
	}\bibfield  {title} {\enquote {\bibinfo {title} {Toric dynamical systems},}\
	}\href {\doibase 10.1016/j.jsc.2008.08.006} {\bibfield  {journal} {\bibinfo
			{journal} {J. Symb. Comput.}\ }\textbf {\bibinfo {volume} {44}},\ \bibinfo
		{pages} {1551--1565} (\bibinfo {year} {2009})}\BibitemShut {NoStop}%
	\bibitem [{\citenamefont {Mincheva}(2012)}]{Mincheva.2012}%
	\BibitemOpen
	\bibfield  {author} {\bibinfo {author} {\bibfnamefont {M.}~\bibnamefont
			{Mincheva}},\ }\bibfield  {title} {\enquote {\bibinfo {title} {Oscillations
				in non-mass action kinetics models of biochemical reaction networks arising
				from pairs of subnetworks},}\ }\href {\doibase 10.1007/s10910-011-9955-8}
	{\bibfield  {journal} {\bibinfo  {journal} {J. Math. Chem.}\ }\textbf
		{\bibinfo {volume} {50}},\ \bibinfo {pages} {1111--1125} (\bibinfo {year}
		{2012})}\BibitemShut {NoStop}%
	\bibitem [{\citenamefont {Domijan}\ and\ \citenamefont
		{Kirkilionis}(2009)}]{Domijan.2009}%
	\BibitemOpen
	\bibfield  {author} {\bibinfo {author} {\bibfnamefont {M.}~\bibnamefont
			{Domijan}}\ and\ \bibinfo {author} {\bibfnamefont {M.}~\bibnamefont
			{Kirkilionis}},\ }\bibfield  {title} {\enquote {\bibinfo {title} {Bistability
				and oscillations in chemical reaction networks},}\ }\href {\doibase
		10.1007/s00285-008-0234-7} {\bibfield  {journal} {\bibinfo  {journal} {J.
				Math. Biol.}\ }\textbf {\bibinfo {volume} {59}},\ \bibinfo {pages} {467--501}
		(\bibinfo {year} {2009})}\BibitemShut {NoStop}%
	\bibitem [{\citenamefont {Tsimring}(2014)}]{Tsimring.2014}%
	\BibitemOpen
	\bibfield  {author} {\bibinfo {author} {\bibfnamefont {L.~S.}\ \bibnamefont
			{Tsimring}},\ }\bibfield  {title} {\enquote {\bibinfo {title} {Noise in
				biology},}\ }\href {http://stacks.iop.org/0034-4885/77/i=2/a=026601}
	{\bibfield  {journal} {\bibinfo  {journal} {Rep. Prog. Phys.}\ }\textbf
		{\bibinfo {volume} {77}},\ \bibinfo {pages} {026601} (\bibinfo {year}
		{2014})}\BibitemShut {NoStop}%
	\bibitem [{\citenamefont {Hasegawa}\ and\ \citenamefont
		{Arita}(2014)}]{Hasegawa.2014}%
	\BibitemOpen
	\bibfield  {author} {\bibinfo {author} {\bibfnamefont {Y.}~\bibnamefont
			{Hasegawa}}\ and\ \bibinfo {author} {\bibfnamefont {M.}~\bibnamefont
			{Arita}},\ }\bibfield  {title} {\enquote {\bibinfo {title} {Optimal
				implementations for reliable circadian clocks},}\ }\href {\doibase
		10.1103/PhysRevLett.113.108101} {\bibfield  {journal} {\bibinfo  {journal}
			{Phys. Rev. Lett.}\ }\textbf {\bibinfo {volume} {113}},\ \bibinfo {pages}
		{108101} (\bibinfo {year} {2014})}\BibitemShut {NoStop}%
	\bibitem [{\citenamefont {McQuarrie}(1967)}]{Donald.1967}%
	\BibitemOpen
	\bibfield  {author} {\bibinfo {author} {\bibfnamefont {D.~A.}\ \bibnamefont
			{McQuarrie}},\ }\bibfield  {title} {\enquote {\bibinfo {title} {Stochastic
				approach to chemical kinetics},}\ }\href
	{http://www.jstor.org/stable/3212214} {\bibfield  {journal} {\bibinfo
			{journal} {J. Appl. Probab.}\ }\textbf {\bibinfo {volume} {4}},\ \bibinfo
		{pages} {413--478} (\bibinfo {year} {1967})}\BibitemShut {NoStop}%
	\bibitem [{\citenamefont {Gillespie}(1992)}]{Gillespie.1992}%
	\BibitemOpen
	\bibfield  {author} {\bibinfo {author} {\bibfnamefont {D.~T.}\ \bibnamefont
			{Gillespie}},\ }\bibfield  {title} {\enquote {\bibinfo {title} {A rigorous
				derivation of the chemical master equation},}\ }\href {\doibase
		10.1016/0378-4371(92)90283-V} {\bibfield  {journal} {\bibinfo  {journal}
			{Physica A}\ }\textbf {\bibinfo {volume} {188}},\ \bibinfo {pages} {404--425}
		(\bibinfo {year} {1992})}\BibitemShut {NoStop}%
	\bibitem [{\citenamefont {Gillespie}(1977)}]{Gillespie.1977}%
	\BibitemOpen
	\bibfield  {author} {\bibinfo {author} {\bibfnamefont {D.~T.}\ \bibnamefont
			{Gillespie}},\ }\bibfield  {title} {\enquote {\bibinfo {title} {Exact
				stochastic simulation of coupled chemical reactions},}\ }\href {\doibase
		10.1021/j100540a008} {\bibfield  {journal} {\bibinfo  {journal} {J. Phys.
				Chem.}\ }\textbf {\bibinfo {volume} {81}},\ \bibinfo {pages} {2340--2361}
		(\bibinfo {year} {1977})}\BibitemShut {NoStop}%
	\bibitem [{\citenamefont {Gibson}\ and\ \citenamefont
		{Bruck}(2000)}]{Gibson.2000}%
	\BibitemOpen
	\bibfield  {author} {\bibinfo {author} {\bibfnamefont {M.~A.}\ \bibnamefont
			{Gibson}}\ and\ \bibinfo {author} {\bibfnamefont {J.}~\bibnamefont {Bruck}},\
	}\bibfield  {title} {\enquote {\bibinfo {title} {Efficient exact stochastic
				simulation of chemical systems with many species and many channels},}\ }\href
	{\doibase 10.1021/jp993732q} {\bibfield  {journal} {\bibinfo  {journal} {J.
				Phys. Chem. A}\ }\textbf {\bibinfo {volume} {104}},\ \bibinfo {pages}
		{1876--1889} (\bibinfo {year} {2000})}\BibitemShut {NoStop}%
	\bibitem [{\citenamefont {Gillespie}(2001)}]{Gillespie.2001}%
	\BibitemOpen
	\bibfield  {author} {\bibinfo {author} {\bibfnamefont {D.~T.}\ \bibnamefont
			{Gillespie}},\ }\bibfield  {title} {\enquote {\bibinfo {title} {Approximate
				accelerated stochastic simulation of chemically reacting systems},}\ }\href
	{\doibase 10.1063/1.1378322} {\bibfield  {journal} {\bibinfo  {journal} {J.
				Chem. Phys.}\ }\textbf {\bibinfo {volume} {115}},\ \bibinfo {pages}
		{1716--1733} (\bibinfo {year} {2001})}\BibitemShut {NoStop}%
	\bibitem [{\citenamefont {Munsky}\ and\ \citenamefont
		{Khammash}(2006)}]{Brian.2006}%
	\BibitemOpen
	\bibfield  {author} {\bibinfo {author} {\bibfnamefont {B.}~\bibnamefont
			{Munsky}}\ and\ \bibinfo {author} {\bibfnamefont {M.}~\bibnamefont
			{Khammash}},\ }\bibfield  {title} {\enquote {\bibinfo {title} {The finite
				state projection algorithm for the solution of the chemical master
				equation},}\ }\href {\doibase 10.1063/1.2145882} {\bibfield  {journal}
		{\bibinfo  {journal} {J. Chem. Phys.}\ }\textbf {\bibinfo {volume} {124}},\
		\bibinfo {pages} {044104} (\bibinfo {year} {2006})}\BibitemShut {NoStop}%
	\bibitem [{\citenamefont {Gupta}, \citenamefont {Mikelson},\ and\ \citenamefont
		{Khammash}(2017)}]{Gupta.2017}%
	\BibitemOpen
	\bibfield  {author} {\bibinfo {author} {\bibfnamefont {A.}~\bibnamefont
			{Gupta}}, \bibinfo {author} {\bibfnamefont {J.}~\bibnamefont {Mikelson}}, \
		and\ \bibinfo {author} {\bibfnamefont {M.}~\bibnamefont {Khammash}},\
	}\bibfield  {title} {\enquote {\bibinfo {title} {A finite state projection
				algorithm for the stationary solution of the chemical master equation},}\
	}\href {\doibase 10.1063/1.5006484} {\bibfield  {journal} {\bibinfo
			{journal} {J. Chem. Phys.}\ }\textbf {\bibinfo {volume} {147}},\ \bibinfo
		{pages} {154101} (\bibinfo {year} {2017})}\BibitemShut {NoStop}%
	\bibitem [{\citenamefont {Kazeev}\ \emph {et~al.}(2014)\citenamefont {Kazeev},
		\citenamefont {Khammash}, \citenamefont {Nip},\ and\ \citenamefont
		{Schwab}}]{Vladimir.2014}%
	\BibitemOpen
	\bibfield  {author} {\bibinfo {author} {\bibfnamefont {V.}~\bibnamefont
			{Kazeev}}, \bibinfo {author} {\bibfnamefont {M.}~\bibnamefont {Khammash}},
		\bibinfo {author} {\bibfnamefont {M.}~\bibnamefont {Nip}}, \ and\ \bibinfo
		{author} {\bibfnamefont {C.}~\bibnamefont {Schwab}},\ }\bibfield  {title}
	{\enquote {\bibinfo {title} {Direct solution of the chemical master equation
				using quantized tensor trains},}\ }\href {\doibase
		10.1371/journal.pcbi.1003359} {\bibfield  {journal} {\bibinfo  {journal}
			{PLoS Comput. Biol.}\ }\textbf {\bibinfo {volume} {10}},\ \bibinfo {pages}
		{1--19} (\bibinfo {year} {2014})}\BibitemShut {NoStop}%
	\bibitem [{\citenamefont {Wang}\ and\ \citenamefont {Xia}(2005)}]{Wang.2005}%
	\BibitemOpen
	\bibfield  {author} {\bibinfo {author} {\bibfnamefont {D.}~\bibnamefont
			{Wang}}\ and\ \bibinfo {author} {\bibfnamefont {B.}~\bibnamefont {Xia}},\
	}\bibfield  {title} {\enquote {\bibinfo {title} {Stability analysis of
				biological systems with real solution classification},}\ }in\ \href {\doibase
		10.1145/1073884.1073933} {\emph {\bibinfo {booktitle} {Proceedings of the
				2005 International Symposium on Symbolic and Algebraic Computation}}},\
	\bibinfo {series and number} {ISSAC '05}\ (\bibinfo  {publisher} {ACM},\
	\bibinfo {address} {New York},\ \bibinfo {year} {2005})\ pp.\ \bibinfo
	{pages} {354--361}\BibitemShut {NoStop}%
	\bibitem [{\citenamefont {Manrai}\ and\ \citenamefont
		{Gunawardena}(2008)}]{Manrai.2008}%
	\BibitemOpen
	\bibfield  {author} {\bibinfo {author} {\bibfnamefont {A.~K.}\ \bibnamefont
			{Manrai}}\ and\ \bibinfo {author} {\bibfnamefont {J.}~\bibnamefont
			{Gunawardena}},\ }\bibfield  {title} {\enquote {\bibinfo {title} {The
				geometry of multisite phosphorylation},}\ }\href {\doibase
		10.1529/biophysj.108.140632} {\bibfield  {journal} {\bibinfo  {journal}
			{Biophys. J.}\ }\textbf {\bibinfo {volume} {95}},\ \bibinfo {pages}
		{5533--5543} (\bibinfo {year} {2008})}\BibitemShut {NoStop}%
	\bibitem [{\citenamefont {Thomson}\ and\ \citenamefont
		{Gunawardena}(2009)}]{Thomson.2009}%
	\BibitemOpen
	\bibfield  {author} {\bibinfo {author} {\bibfnamefont {M.}~\bibnamefont
			{Thomson}}\ and\ \bibinfo {author} {\bibfnamefont {J.}~\bibnamefont
			{Gunawardena}},\ }\bibfield  {title} {\enquote {\bibinfo {title} {The
				rational parameterisation theorem for multisite post-translational
				modification systems},}\ }\href {\doibase 10.1016/j.jtbi.2009.09.003}
	{\bibfield  {journal} {\bibinfo  {journal} {J. Theor. Biol.}\ }\textbf
		{\bibinfo {volume} {261}},\ \bibinfo {pages} {626--636} (\bibinfo {year}
		{2009})}\BibitemShut {NoStop}%
	\bibitem [{\citenamefont {Mart{\'i}nez-Forero}, \citenamefont
		{Pel{\'a}ez-L{\'o}pez},\ and\ \citenamefont
		{Villoslada}(2010)}]{Martinez.2010}%
	\BibitemOpen
	\bibfield  {author} {\bibinfo {author} {\bibfnamefont {I.}~\bibnamefont
			{Mart{\'i}nez-Forero}}, \bibinfo {author} {\bibfnamefont {A.}~\bibnamefont
			{Pel{\'a}ez-L{\'o}pez}}, \ and\ \bibinfo {author} {\bibfnamefont
			{P.}~\bibnamefont {Villoslada}},\ }\bibfield  {title} {\enquote {\bibinfo
			{title} {Steady state detection of chemical reaction networks using a
				simplified analytical method},}\ }\href {\doibase
		10.1371/journal.pone.0010823} {\bibfield  {journal} {\bibinfo  {journal}
			{PLoS ONE}\ }\textbf {\bibinfo {volume} {5}},\ \bibinfo {pages} {1--6}
		(\bibinfo {year} {2010})}\BibitemShut {NoStop}%
	\bibitem [{\citenamefont {Loriaux}, \citenamefont {Tesler},\ and\ \citenamefont
		{Hoffmann}(2013)}]{Loriaux.2013}%
	\BibitemOpen
	\bibfield  {author} {\bibinfo {author} {\bibfnamefont {P.~M.}\ \bibnamefont
			{Loriaux}}, \bibinfo {author} {\bibfnamefont {G.}~\bibnamefont {Tesler}}, \
		and\ \bibinfo {author} {\bibfnamefont {A.}~\bibnamefont {Hoffmann}},\
	}\bibfield  {title} {\enquote {\bibinfo {title} {Characterizing the
				relationship between steady state and response using analytical expressions
				for the steady states of mass action models},}\ }\href {\doibase
		10.1371/journal.pcbi.1002901} {\bibfield  {journal} {\bibinfo  {journal}
			{PLoS Comput. Biol.}\ }\textbf {\bibinfo {volume} {9}},\ \bibinfo {pages}
		{1--19} (\bibinfo {year} {2013})}\BibitemShut {NoStop}%
	\bibitem [{\citenamefont {Siegal-Gaskins}\ \emph {et~al.}(2015)\citenamefont
		{Siegal-Gaskins}, \citenamefont {Franco}, \citenamefont {Zhou},\ and\
		\citenamefont {Murray}}]{SiegalGaskins.2015}%
	\BibitemOpen
	\bibfield  {author} {\bibinfo {author} {\bibfnamefont {D.}~\bibnamefont
			{Siegal-Gaskins}}, \bibinfo {author} {\bibfnamefont {E.}~\bibnamefont
			{Franco}}, \bibinfo {author} {\bibfnamefont {T.}~\bibnamefont {Zhou}}, \ and\
		\bibinfo {author} {\bibfnamefont {R.~M.}\ \bibnamefont {Murray}},\ }\bibfield
	{title} {\enquote {\bibinfo {title} {An analytical approach to bistable
				biological circuit discrimination using real algebraic geometry},}\ }\href
	{\doibase 10.1098/rsif.2015.0288} {\bibfield  {journal} {\bibinfo  {journal}
			{J. R. Soc. Interface}\ }\textbf {\bibinfo {volume} {12}},\ \bibinfo {pages}
		{20150288} (\bibinfo {year} {2015})}\BibitemShut {NoStop}%
	\bibitem [{\citenamefont {Otero-Muras}, \citenamefont {Yordanov},\ and\
		\citenamefont {Stelling}(2017)}]{Otero.2017}%
	\BibitemOpen
	\bibfield  {author} {\bibinfo {author} {\bibfnamefont {I.}~\bibnamefont
			{Otero-Muras}}, \bibinfo {author} {\bibfnamefont {P.}~\bibnamefont
			{Yordanov}}, \ and\ \bibinfo {author} {\bibfnamefont {J.}~\bibnamefont
			{Stelling}},\ }\bibfield  {title} {\enquote {\bibinfo {title} {Chemical
				reaction network theory elucidates sources of multistability in interferon
				signaling},}\ }\href {\doibase 10.1371/journal.pcbi.1005454} {\bibfield
		{journal} {\bibinfo  {journal} {PLoS Comput. Biol.}\ }\textbf {\bibinfo
			{volume} {13}},\ \bibinfo {pages} {1--28} (\bibinfo {year}
		{2017})}\BibitemShut {NoStop}%
	\bibitem [{\citenamefont {Gonz{\'a}lez}(2017)}]{Jose.2017}%
	\BibitemOpen
	\bibfield  {author} {\bibinfo {author} {\bibfnamefont {J.~M.~M.}\
			\bibnamefont {Gonz{\'a}lez}},\ }\bibfield  {title} {\enquote {\bibinfo
			{title} {Revealing regions of multiple steady states in heterogeneous
				catalytic chemical reaction networks using {Gr{\"o}bner} basis},}\ }\href
	{\doibase 10.1016/j.jsc.2016.07.024} {\bibfield  {journal} {\bibinfo
			{journal} {J. Symb. Comput.}\ }\textbf {\bibinfo {volume} {80}},\ \bibinfo
		{pages} {521 -- 537} (\bibinfo {year} {2017})}\BibitemShut {NoStop}%
	\bibitem [{\citenamefont {Cox}, \citenamefont {Little},\ and\ \citenamefont
		{O'Shea}(2007)}]{Cox.2007}%
	\BibitemOpen
	\bibfield  {author} {\bibinfo {author} {\bibfnamefont {D.~A.}\ \bibnamefont
			{Cox}}, \bibinfo {author} {\bibfnamefont {J.}~\bibnamefont {Little}}, \ and\
		\bibinfo {author} {\bibfnamefont {D.}~\bibnamefont {O'Shea}},\ }\href@noop {}
	{\emph {\bibinfo {title} {Ideals, Varieties, and Algorithms}}}\ (\bibinfo
	{publisher} {Springer-Verlag},\ \bibinfo {address} {New York},\ \bibinfo
	{year} {2007})\BibitemShut {NoStop}%
	\bibitem [{\citenamefont {Gupta}, \citenamefont {Briat},\ and\ \citenamefont
		{Khammash}(2014)}]{Gupta.2014}%
	\BibitemOpen
	\bibfield  {author} {\bibinfo {author} {\bibfnamefont {A.}~\bibnamefont
			{Gupta}}, \bibinfo {author} {\bibfnamefont {C.}~\bibnamefont {Briat}}, \ and\
		\bibinfo {author} {\bibfnamefont {M.}~\bibnamefont {Khammash}},\ }\bibfield
	{title} {\enquote {\bibinfo {title} {A scalable computational framework for
				establishing long-term behavior of stochastic reaction networks},}\ }\href
	{\doibase 10.1371/journal.pcbi.1003669} {\bibfield  {journal} {\bibinfo
			{journal} {PLoS Comput. Biol.}\ }\textbf {\bibinfo {volume} {10}},\ \bibinfo
		{pages} {1--16} (\bibinfo {year} {2014})}\BibitemShut {NoStop}%
	\bibitem [{\citenamefont {Gupta}\ and\ \citenamefont
		{Khammash}(2018)}]{Gupta.2018}%
	\BibitemOpen
	\bibfield  {author} {\bibinfo {author} {\bibfnamefont {A.}~\bibnamefont
			{Gupta}}\ and\ \bibinfo {author} {\bibfnamefont {M.}~\bibnamefont
			{Khammash}},\ }\bibfield  {title} {\enquote {\bibinfo {title} {Computational
				identification of irreducible state-spaces for stochastic reaction
				networks},}\ }\href {\doibase 10.1137/17M1134299} {\bibfield  {journal}
		{\bibinfo  {journal} {SIAM J. Appl. Dyn. Syst.}\ }\textbf {\bibinfo {volume}
			{17}},\ \bibinfo {pages} {1213--1266} (\bibinfo {year} {2018})}\BibitemShut
	{NoStop}%
	\bibitem [{\citenamefont {Ozbudak}\ \emph {et~al.}(2002)\citenamefont
		{Ozbudak}, \citenamefont {Thattai}, \citenamefont {Kurtser}, \citenamefont
		{Grossman},\ and\ \citenamefont {van Oudenaarden}}]{Ozbudak.2002}%
	\BibitemOpen
	\bibfield  {author} {\bibinfo {author} {\bibfnamefont {E.~M.}\ \bibnamefont
			{Ozbudak}}, \bibinfo {author} {\bibfnamefont {M.}~\bibnamefont {Thattai}},
		\bibinfo {author} {\bibfnamefont {I.}~\bibnamefont {Kurtser}}, \bibinfo
		{author} {\bibfnamefont {A.~D.}\ \bibnamefont {Grossman}}, \ and\ \bibinfo
		{author} {\bibfnamefont {A.}~\bibnamefont {van Oudenaarden}},\ }\bibfield
	{title} {\enquote {\bibinfo {title} {Regulation of noise in the expression of
				a single gene},}\ }\href {\doibase 10.1038/ng869} {\bibfield  {journal}
		{\bibinfo  {journal} {Nat. Genet.}\ }\textbf {\bibinfo {volume} {31}},\
		\bibinfo {pages} {69} (\bibinfo {year} {2002})}\BibitemShut {NoStop}%
	\bibitem [{\citenamefont {K{\ae}n}, \citenamefont {Blake},\ and\ \citenamefont
		{Collins}(2003)}]{Mads.2003}%
	\BibitemOpen
	\bibfield  {author} {\bibinfo {author} {\bibfnamefont {M.}~\bibnamefont
			{K{\ae}n}}, \bibinfo {author} {\bibfnamefont {W.~J.}\ \bibnamefont {Blake}},
		\ and\ \bibinfo {author} {\bibfnamefont {J.}~\bibnamefont {Collins}},\
	}\bibfield  {title} {\enquote {\bibinfo {title} {The engineering of gene
				regulatory networks},}\ }\href {\doibase
		10.1146/annurev.bioeng.5.040202.121553} {\bibfield  {journal} {\bibinfo
			{journal} {Annu. Rev. Biomed. Eng.}\ }\textbf {\bibinfo {volume} {5}},\
		\bibinfo {pages} {179--206} (\bibinfo {year} {2003})}\BibitemShut {NoStop}%
	\bibitem [{\citenamefont {Kuntz}\ \emph {et~al.}(2017)\citenamefont {Kuntz},
		\citenamefont {Thomas}, \citenamefont {Stan},\ and\ \citenamefont
		{Barahona}}]{Juan.2017}%
	\BibitemOpen
	\bibfield  {author} {\bibinfo {author} {\bibfnamefont {J.}~\bibnamefont
			{Kuntz}}, \bibinfo {author} {\bibfnamefont {P.}~\bibnamefont {Thomas}},
		\bibinfo {author} {\bibfnamefont {G.-B.}\ \bibnamefont {Stan}}, \ and\
		\bibinfo {author} {\bibfnamefont {M.}~\bibnamefont {Barahona}},\ }\bibfield
	{title} {\enquote {\bibinfo {title} {Rigorous bounds on the stationary
				distributions of the chemical master equation via mathematical
				programming},}\ }\href {https://arxiv.org/abs/1702.05468} {\bibfield
		{journal} {\bibinfo  {journal} {arXiv:1702.05468}\ } (\bibinfo {year}
		{2017})}\BibitemShut {NoStop}%
	\bibitem [{\citenamefont {Sakurai}\ and\ \citenamefont
		{Hori}(2018)}]{Sakurai.2017}%
	\BibitemOpen
	\bibfield  {author} {\bibinfo {author} {\bibfnamefont {Y.}~\bibnamefont
			{Sakurai}}\ and\ \bibinfo {author} {\bibfnamefont {Y.}~\bibnamefont {Hori}},\
	}\bibfield  {title} {\enquote {\bibinfo {title} {Optimization-based synthesis
				of stochastic biocircuits with statistical specifications},}\ }\href
	{http://rsif.royalsocietypublishing.org/content/15/138/20170709} {\bibfield
		{journal} {\bibinfo  {journal} {J. R. Soc. Interface}\ }\textbf {\bibinfo
			{volume} {15}} (\bibinfo {year} {2018})}\BibitemShut {NoStop}%
	\bibitem [{\citenamefont {Elf}\ and\ \citenamefont
		{M{\aa}ns}(2003)}]{Johan.2003}%
	\BibitemOpen
	\bibfield  {author} {\bibinfo {author} {\bibfnamefont {J.}~\bibnamefont
			{Elf}}\ and\ \bibinfo {author} {\bibfnamefont {E.}~\bibnamefont {M{\aa}ns}},\
	}\bibfield  {title} {\enquote {\bibinfo {title} {Fast evaluation of
				fluctuations in biochemical networks with the linear noise approximation},}\
	}\href {\doibase 10.1101/gr.1196503} {\bibfield  {journal} {\bibinfo
			{journal} {Genome Res.}\ }\textbf {\bibinfo {volume} {13}},\ \bibinfo {pages}
		{2475--2484} (\bibinfo {year} {2003})}\BibitemShut {NoStop}%
	\bibitem [{\citenamefont {Scott}, \citenamefont {Ingalls},\ and\ \citenamefont
		{K{\ae}rn}(2006)}]{Scott.2006}%
	\BibitemOpen
	\bibfield  {author} {\bibinfo {author} {\bibfnamefont {M.}~\bibnamefont
			{Scott}}, \bibinfo {author} {\bibfnamefont {B.}~\bibnamefont {Ingalls}}, \
		and\ \bibinfo {author} {\bibfnamefont {M.}~\bibnamefont {K{\ae}rn}},\
	}\bibfield  {title} {\enquote {\bibinfo {title} {Estimations of intrinsic and
				extrinsic noise in models of nonlinear genetic networks},}\ }\href {\doibase
		10.1063/1.2211787} {\bibfield  {journal} {\bibinfo  {journal} {Chaos}\
		}\textbf {\bibinfo {volume} {16}},\ \bibinfo {pages} {026107} (\bibinfo
		{year} {2006})}\BibitemShut {NoStop}%
	\bibitem [{\citenamefont {Kampen}(2007)}]{Kampen.2007}%
	\BibitemOpen
	\bibfield  {author} {\bibinfo {author} {\bibfnamefont {N.~V.}\ \bibnamefont
			{Kampen}},\ }\href@noop {} {\emph {\bibinfo {title} {Stochastic Processes in
				Physics and Chemistry}}}\ (\bibinfo  {publisher} {Elsevier},\ \bibinfo
	{address} {Amsterdam},\ \bibinfo {year} {2007})\BibitemShut {NoStop}%
	\bibitem [{\citenamefont {Wallace}\ \emph {et~al.}(2012)\citenamefont
		{Wallace}, \citenamefont {Gillespie}, \citenamefont {Sanft},\ and\
		\citenamefont {Petzold}}]{Wallace.2012}%
	\BibitemOpen
	\bibfield  {author} {\bibinfo {author} {\bibfnamefont {E.~W.~J.}\
			\bibnamefont {Wallace}}, \bibinfo {author} {\bibfnamefont {D.~T.}\
			\bibnamefont {Gillespie}}, \bibinfo {author} {\bibfnamefont {K.~R.}\
			\bibnamefont {Sanft}}, \ and\ \bibinfo {author} {\bibfnamefont {L.~R.}\
			\bibnamefont {Petzold}},\ }\bibfield  {title} {\enquote {\bibinfo {title}
			{Linear noise approximation is valid over limited times for any chemical
				system that is sufficiently large},}\ }\href {\doibase
		10.1049/iet-syb.2011.0038} {\bibfield  {journal} {\bibinfo  {journal} {IET
				Syst. Biol.}\ }\textbf {\bibinfo {volume} {6}},\ \bibinfo {pages} {102--115}
		(\bibinfo {year} {2012})}\BibitemShut {NoStop}%
	\bibitem [{\citenamefont {G{\'o}mez-Uribe}\ and\ \citenamefont
		{Verghese}(2007)}]{Carlos.2007}%
	\BibitemOpen
	\bibfield  {author} {\bibinfo {author} {\bibfnamefont {C.~A.}\ \bibnamefont
			{G{\'o}mez-Uribe}}\ and\ \bibinfo {author} {\bibfnamefont {G.~C.}\
			\bibnamefont {Verghese}},\ }\bibfield  {title} {\enquote {\bibinfo {title}
			{Mass fluctuation kinetics: Capturing stochastic effects in systems of
				chemical reactions through coupled mean-variance computations},}\ }\href
	{\doibase 10.1063/1.2408422} {\bibfield  {journal} {\bibinfo  {journal} {J.
				Chem. Phys.}\ }\textbf {\bibinfo {volume} {126}},\ \bibinfo {pages} {024109}
		(\bibinfo {year} {2007})}\BibitemShut {NoStop}%
	\bibitem [{\citenamefont {Ferm}, \citenamefont {L{\"o}tstedt},\ and\
		\citenamefont {Hellander}(2008)}]{Ferm.2008}%
	\BibitemOpen
	\bibfield  {author} {\bibinfo {author} {\bibfnamefont {L.}~\bibnamefont
			{Ferm}}, \bibinfo {author} {\bibfnamefont {P.}~\bibnamefont {L{\"o}tstedt}},
		\ and\ \bibinfo {author} {\bibfnamefont {A.}~\bibnamefont {Hellander}},\
	}\bibfield  {title} {\enquote {\bibinfo {title} {A hierarchy of
				approximations of the master equation scaled by a size parameter},}\ }\href
	{\doibase 10.1007/s10915-007-9179-z} {\bibfield  {journal} {\bibinfo
			{journal} {J. Sci. Comput.}\ }\textbf {\bibinfo {volume} {34}},\ \bibinfo
		{pages} {127--151} (\bibinfo {year} {2008})}\BibitemShut {NoStop}%
	\bibitem [{\citenamefont {Lee}, \citenamefont {Kim},\ and\ \citenamefont
		{Kim}(2009)}]{Chang.2009}%
	\BibitemOpen
	\bibfield  {author} {\bibinfo {author} {\bibfnamefont {C.~H.}\ \bibnamefont
			{Lee}}, \bibinfo {author} {\bibfnamefont {K.~H.}\ \bibnamefont {Kim}}, \ and\
		\bibinfo {author} {\bibfnamefont {P.}~\bibnamefont {Kim}},\ }\bibfield
	{title} {\enquote {\bibinfo {title} {A moment closure method for stochastic
				reaction networks},}\ }\href {\doibase 10.1063/1.3103264} {\bibfield
		{journal} {\bibinfo  {journal} {J. Chem. Phys.}\ }\textbf {\bibinfo {volume}
			{130}},\ \bibinfo {pages} {134107} (\bibinfo {year} {2009})}\BibitemShut
	{NoStop}%
	\bibitem [{\citenamefont {Gillespie}(2009)}]{Gillespie.2009}%
	\BibitemOpen
	\bibfield  {author} {\bibinfo {author} {\bibfnamefont {C.~S.}\ \bibnamefont
			{Gillespie}},\ }\bibfield  {title} {\enquote {\bibinfo {title}
			{Moment-closure approximations for mass-action models},}\ }\href {\doibase
		10.1049/iet-syb:20070031} {\bibfield  {journal} {\bibinfo  {journal} {IET
				Syst. Biol.}\ }\textbf {\bibinfo {volume} {3}},\ \bibinfo {pages} {52--58}
		(\bibinfo {year} {2009})}\BibitemShut {NoStop}%
	\bibitem [{\citenamefont {Grima}(2012)}]{Ramon.2012}%
	\BibitemOpen
	\bibfield  {author} {\bibinfo {author} {\bibfnamefont {R.}~\bibnamefont
			{Grima}},\ }\bibfield  {title} {\enquote {\bibinfo {title} {A study of the
				accuracy of moment-closure approximations for stochastic chemical
				kinetics},}\ }\href {\doibase 10.1063/1.3702848} {\bibfield  {journal}
		{\bibinfo  {journal} {J. Chem. Phys.}\ }\textbf {\bibinfo {volume} {136}},\
		\bibinfo {pages} {154105} (\bibinfo {year} {2012})}\BibitemShut {NoStop}%
	\bibitem [{\citenamefont {Ale}, \citenamefont {Kirk},\ and\ \citenamefont
		{Stumpf}(2013)}]{Angelique.2013}%
	\BibitemOpen
	\bibfield  {author} {\bibinfo {author} {\bibfnamefont {A.}~\bibnamefont
			{Ale}}, \bibinfo {author} {\bibfnamefont {P.}~\bibnamefont {Kirk}}, \ and\
		\bibinfo {author} {\bibfnamefont {M.~P.~H.}\ \bibnamefont {Stumpf}},\
	}\bibfield  {title} {\enquote {\bibinfo {title} {A general moment expansion
				method for stochastic kinetic models},}\ }\href {\doibase 10.1063/1.4802475}
	{\bibfield  {journal} {\bibinfo  {journal} {J. Chem. Phys.}\ }\textbf
		{\bibinfo {volume} {138}},\ \bibinfo {pages} {174101} (\bibinfo {year}
		{2013})}\BibitemShut {NoStop}%
	\bibitem [{\citenamefont {Gillespie}(2000)}]{Gillespie.2000}%
	\BibitemOpen
	\bibfield  {author} {\bibinfo {author} {\bibfnamefont {D.~T.}\ \bibnamefont
			{Gillespie}},\ }\bibfield  {title} {\enquote {\bibinfo {title} {The chemical
				{Langevin} equation},}\ }\href {\doibase 10.1063/1.481811} {\bibfield
		{journal} {\bibinfo  {journal} {J. Chem. Phys.}\ }\textbf {\bibinfo {volume}
			{113}},\ \bibinfo {pages} {297--306} (\bibinfo {year} {2000})}\BibitemShut
	{NoStop}%
	\bibitem [{\citenamefont {Mishra}(1993)}]{Mishra.1993}%
	\BibitemOpen
	\bibfield  {author} {\bibinfo {author} {\bibfnamefont {B.}~\bibnamefont
			{Mishra}},\ }\href@noop {} {\emph {\bibinfo {title} {Algorithmic Algebra}}}\
	(\bibinfo  {publisher} {Springer-Verlag},\ \bibinfo {address} {New York},\
	\bibinfo {year} {1993})\BibitemShut {NoStop}%
	\bibitem [{\citenamefont {Wang}(1998)}]{Wang.1998}%
	\BibitemOpen
	\bibfield  {author} {\bibinfo {author} {\bibfnamefont {D.}~\bibnamefont
			{Wang}},\ }\bibfield  {title} {\enquote {\bibinfo {title} {Decomposing
				polynomial systems into simple systems},}\ }\href {\doibase
		10.1006/jsco.1997.0177} {\bibfield  {journal} {\bibinfo  {journal} {J. Symb.
				Comput.}\ }\textbf {\bibinfo {volume} {25}},\ \bibinfo {pages} {295--314}
		(\bibinfo {year} {1998})}\BibitemShut {NoStop}%
	\bibitem [{\citenamefont {Keeling}(2000)}]{Matt.2000}%
	\BibitemOpen
	\bibfield  {author} {\bibinfo {author} {\bibfnamefont {M.~J.}\ \bibnamefont
			{Keeling}},\ }\bibfield  {title} {\enquote {\bibinfo {title} {Multiplicative
				moments and measures of persistence in ecology},}\ }\href {\doibase
		10.1006/jtbi.2000.2066} {\bibfield  {journal} {\bibinfo  {journal} {J. Theor.
				Biol.}\ }\textbf {\bibinfo {volume} {205}},\ \bibinfo {pages} {269 -- 281}
		(\bibinfo {year} {2000})}\BibitemShut {NoStop}%
	\bibitem [{\citenamefont {N{\aa}sell}(2003)}]{Nasell.2003}%
	\BibitemOpen
	\bibfield  {author} {\bibinfo {author} {\bibfnamefont {I.}~\bibnamefont
			{N{\aa}sell}},\ }\bibfield  {title} {\enquote {\bibinfo {title} {An extension
				of the moment closure method},}\ }\href {\doibase
		10.1016/S0040-5809(03)00074-1} {\bibfield  {journal} {\bibinfo  {journal}
			{Theor. Popul. Biol.}\ }\textbf {\bibinfo {volume} {64}},\ \bibinfo {pages}
		{233--239} (\bibinfo {year} {2003})}\BibitemShut {NoStop}%
	\bibitem [{\citenamefont {Schnoerr}, \citenamefont {Sanguinetti},\ and\
		\citenamefont {Grima}(2015)}]{David.2015}%
	\BibitemOpen
	\bibfield  {author} {\bibinfo {author} {\bibfnamefont {D.}~\bibnamefont
			{Schnoerr}}, \bibinfo {author} {\bibfnamefont {G.}~\bibnamefont
			{Sanguinetti}}, \ and\ \bibinfo {author} {\bibfnamefont {R.}~\bibnamefont
			{Grima}},\ }\bibfield  {title} {\enquote {\bibinfo {title} {Comparison of
				different moment-closure approximations for stochastic chemical kinetics},}\
	}\href {\doibase 10.1063/1.4934990} {\bibfield  {journal} {\bibinfo
			{journal} {J. Chem. Phys.}\ }\textbf {\bibinfo {volume} {143}},\ \bibinfo
		{pages} {185101} (\bibinfo {year} {2015})}\BibitemShut {NoStop}%
	\bibitem [{\citenamefont {Wang}(2000)}]{Wang.2000}%
	\BibitemOpen
	\bibfield  {author} {\bibinfo {author} {\bibfnamefont {D.}~\bibnamefont
			{Wang}},\ }\bibfield  {title} {\enquote {\bibinfo {title} {Computing
				triangular systems and regular systems},}\ }\href {\doibase
		10.1006/jsco.1999.0355} {\bibfield  {journal} {\bibinfo  {journal} {J. Symb.
				Comput.}\ }\textbf {\bibinfo {volume} {30}},\ \bibinfo {pages} {221--236}
		(\bibinfo {year} {2000})}\BibitemShut {NoStop}%
	\bibitem [{\citenamefont {Yang}, \citenamefont {Hou},\ and\ \citenamefont
		{Xia}(2001)}]{Yang.2001}%
	\BibitemOpen
	\bibfield  {author} {\bibinfo {author} {\bibfnamefont {L.}~\bibnamefont
			{Yang}}, \bibinfo {author} {\bibfnamefont {X.}~\bibnamefont {Hou}}, \ and\
		\bibinfo {author} {\bibfnamefont {B.}~\bibnamefont {Xia}},\ }\bibfield
	{title} {\enquote {\bibinfo {title} {A complete algorithm for automated
				discovering of a class of inequality-type theorems},}\ }\href {\doibase
		10.1007/BF02713938} {\bibfield  {journal} {\bibinfo  {journal} {Sci. China
				Ser. F}\ }\textbf {\bibinfo {volume} {44}},\ \bibinfo {pages} {33--49}
		(\bibinfo {year} {2001})}\BibitemShut {NoStop}%
	\bibitem [{\citenamefont {Yang}\ and\ \citenamefont {Xia}(2005)}]{Yang.2005}%
	\BibitemOpen
	\bibfield  {author} {\bibinfo {author} {\bibfnamefont {L.}~\bibnamefont
			{Yang}}\ and\ \bibinfo {author} {\bibfnamefont {B.}~\bibnamefont {Xia}},\
	}\bibfield  {title} {\enquote {\bibinfo {title} {Real solution classification
				for parametric semi-algebraic systems},}\ }in\ \href@noop {} {\emph {\bibinfo
			{booktitle} {Algorithmic Algebra and Logic}}}\ (\bibinfo {year}
	{2005})\BibitemShut {NoStop}%
	\bibitem [{\citenamefont {Arnon}, \citenamefont {Collins},\ and\ \citenamefont
		{McCallum}(1984{\natexlab{a}})}]{Dennis.1984.1}%
	\BibitemOpen
	\bibfield  {author} {\bibinfo {author} {\bibfnamefont {D.~S.}\ \bibnamefont
			{Arnon}}, \bibinfo {author} {\bibfnamefont {G.~E.}\ \bibnamefont {Collins}},
		\ and\ \bibinfo {author} {\bibfnamefont {S.}~\bibnamefont {McCallum}},\
	}\bibfield  {title} {\enquote {\bibinfo {title} {Cylindrical algebraic
				decomposition {I}: The basic algorithm},}\ }\href {\doibase 10.1137/0213054}
	{\bibfield  {journal} {\bibinfo  {journal} {SIAM J. Comput.}\ }\textbf
		{\bibinfo {volume} {13}},\ \bibinfo {pages} {865--877} (\bibinfo {year}
		{1984}{\natexlab{a}})}\BibitemShut {NoStop}%
	\bibitem [{\citenamefont {Arnon}, \citenamefont {Collins},\ and\ \citenamefont
		{McCallum}(1984{\natexlab{b}})}]{Dennis.1984.2}%
	\BibitemOpen
	\bibfield  {author} {\bibinfo {author} {\bibfnamefont {D.~S.}\ \bibnamefont
			{Arnon}}, \bibinfo {author} {\bibfnamefont {G.~E.}\ \bibnamefont {Collins}},
		\ and\ \bibinfo {author} {\bibfnamefont {S.}~\bibnamefont {McCallum}},\
	}\bibfield  {title} {\enquote {\bibinfo {title} {Cylindrical algebraic
				decomposition {II}: An adjacency algorithm for the plane},}\ }\href {\doibase
		10.1137/0213055} {\bibfield  {journal} {\bibinfo  {journal} {SIAM J.
				Comput.}\ }\textbf {\bibinfo {volume} {13}},\ \bibinfo {pages} {878--889}
		(\bibinfo {year} {1984}{\natexlab{b}})}\BibitemShut {NoStop}%
	\bibitem [{\citenamefont {Hortsch}\ and\ \citenamefont
		{Kremling}(2018)}]{Hortsch.2018}%
	\BibitemOpen
	\bibfield  {author} {\bibinfo {author} {\bibfnamefont {S.~K.}\ \bibnamefont
			{Hortsch}}\ and\ \bibinfo {author} {\bibfnamefont {A.}~\bibnamefont
			{Kremling}},\ }\bibfield  {title} {\enquote {\bibinfo {title}
			{Characterization of noise in multistable genetic circuits reveals ways to
				modulate heterogeneity},}\ }\href {\doibase 10.1371/journal.pone.0194779}
	{\bibfield  {journal} {\bibinfo  {journal} {PLoS ONE}\ }\textbf {\bibinfo
			{volume} {13}},\ \bibinfo {pages} {1--20} (\bibinfo {year}
		{2018})}\BibitemShut {NoStop}%
	\bibitem [{\citenamefont {Prigogine}\ and\ \citenamefont
		{Lefever}(1968)}]{Prigogine.1968}%
	\BibitemOpen
	\bibfield  {author} {\bibinfo {author} {\bibfnamefont {I.}~\bibnamefont
			{Prigogine}}\ and\ \bibinfo {author} {\bibfnamefont {R.}~\bibnamefont
			{Lefever}},\ }\bibfield  {title} {\enquote {\bibinfo {title} {Symmetry
				breaking instabilities in dissipative systems. {II}},}\ }\href {\doibase
		10.1063/1.1668896} {\bibfield  {journal} {\bibinfo  {journal} {J. Chem.
				Phys.}\ }\textbf {\bibinfo {volume} {48}},\ \bibinfo {pages} {1695--1700}
		(\bibinfo {year} {1968})}\BibitemShut {NoStop}%
	\bibitem [{\citenamefont {Elowitz}\ \emph {et~al.}(2002)\citenamefont
		{Elowitz}, \citenamefont {Levine}, \citenamefont {Siggia},\ and\
		\citenamefont {Swain}}]{Elowitz.2002}%
	\BibitemOpen
	\bibfield  {author} {\bibinfo {author} {\bibfnamefont {M.~B.}\ \bibnamefont
			{Elowitz}}, \bibinfo {author} {\bibfnamefont {A.~J.}\ \bibnamefont {Levine}},
		\bibinfo {author} {\bibfnamefont {E.~D.}\ \bibnamefont {Siggia}}, \ and\
		\bibinfo {author} {\bibfnamefont {P.~S.}\ \bibnamefont {Swain}},\ }\bibfield
	{title} {\enquote {\bibinfo {title} {Stochastic gene expression in a single
				cell},}\ }\href {\doibase 10.1126/science.1070919} {\bibfield  {journal}
		{\bibinfo  {journal} {Science}\ }\textbf {\bibinfo {volume} {297}},\ \bibinfo
		{pages} {1183--1186} (\bibinfo {year} {2002})}\BibitemShut {NoStop}%
	\bibitem [{\citenamefont {Smadbeck}\ and\ \citenamefont
		{Kaznessis}(2015)}]{Patrick.2015}%
	\BibitemOpen
	\bibfield  {author} {\bibinfo {author} {\bibfnamefont {P.}~\bibnamefont
			{Smadbeck}}\ and\ \bibinfo {author} {\bibfnamefont {Y.~N.}\ \bibnamefont
			{Kaznessis}},\ }\bibfield  {title} {\enquote {\bibinfo {title} {On a theory
				of stability for nonlinear stochastic chemical reaction networks},}\ }\href
	{\doibase 10.1063/1.4919834} {\bibfield  {journal} {\bibinfo  {journal} {J.
				Chem. Phys.}\ }\textbf {\bibinfo {volume} {142}},\ \bibinfo {pages} {184101}
		(\bibinfo {year} {2015})}\BibitemShut {NoStop}%
	\bibitem [{\citenamefont {Schnoerr}, \citenamefont {Sanguinetti},\ and\
		\citenamefont {Grima}(2014)}]{David.2014}%
	\BibitemOpen
	\bibfield  {author} {\bibinfo {author} {\bibfnamefont {D.}~\bibnamefont
			{Schnoerr}}, \bibinfo {author} {\bibfnamefont {G.}~\bibnamefont
			{Sanguinetti}}, \ and\ \bibinfo {author} {\bibfnamefont {R.}~\bibnamefont
			{Grima}},\ }\bibfield  {title} {\enquote {\bibinfo {title} {Validity
				conditions for moment closure approximations in stochastic chemical
				kinetics},}\ }\href {\doibase 10.1063/1.4892838} {\bibfield  {journal}
		{\bibinfo  {journal} {J. Chem. Phys.}\ }\textbf {\bibinfo {volume} {141}},\
		\bibinfo {pages} {084103} (\bibinfo {year} {2014})}\BibitemShut {NoStop}%
	\bibitem [{\citenamefont {Thomas}, \citenamefont {Popovi{\'c}},\ and\
		\citenamefont {Grima}(2014)}]{Thomas.2014}%
	\BibitemOpen
	\bibfield  {author} {\bibinfo {author} {\bibfnamefont {P.}~\bibnamefont
			{Thomas}}, \bibinfo {author} {\bibfnamefont {N.}~\bibnamefont {Popovi{\'c}}},
		\ and\ \bibinfo {author} {\bibfnamefont {R.}~\bibnamefont {Grima}},\
	}\bibfield  {title} {\enquote {\bibinfo {title} {Phenotypic switching in gene
				regulatory networks},}\ }\href {\doibase 10.1073/pnas.1400049111} {\bibfield
		{journal} {\bibinfo  {journal} {Proc. Natl. Acad. Sci. U.S.A}\ }\textbf
		{\bibinfo {volume} {111}},\ \bibinfo {pages} {6994--6999} (\bibinfo {year}
		{2014})}\BibitemShut {NoStop}%
\end{thebibliography}
\end{document}